\pgfplotsset{compat=1.14}
\pgfplotsset{every x tick label/.append style={font=\small, yshift=0.0ex}}
\pgfplotsset{every y tick label/.append style={font=\small, yshift=0.0ex}}
\newdimen\LineSpace
\tikzset{
	line space/.code={\LineSpace=#1},
	line space=10pt
}
\tikzset{
	schraffiert/.style={pattern=horizontal lines,pattern color=#1},
	schraffiert/.default=black
}
\tikzstyle{densely dashed}=          [dash pattern=on 6pt off 2pt]
\tikzstyle{densely shadow}=          [dash pattern=on 6.5pt off 1.5pt]
\newcommand{\N}{\mathbb{N}}
\newcommand{\R}{\mathbb{R}}
\newcommand{\dist}{d}
\newcommand{\opt}{\textsc{Opt}}
\newcommand{\deliver}{\textsc{deliver\_and\_return}}
\newcommand{\waituntil}{\textsc{wait\_until}}
\newcommand{\schedule}{\textsc{follow\_schedule}}
\newcommand{\wait}{\textsc{Lazy}}
\renewcommand{\o}{O} 
\renewcommand{\epsilon}{\varepsilon}
\begin{document}
\title{An Improved Algorithm for Open Online Dial-a-Ride\thanks{Supported by DFG grant DI 2041/2.}}
%
%
\author{Júlia Baligács\inst{1}\orcidID{0000-0003-2654-149X} \and
Yann Disser\inst{1}\orcidID{0000-0002-2085-0454} \and 
{Nils~Mosis}\inst{1}\orcidID{0000-0002-0692-0647} \and
David Weckbecker\inst{1}\orcidID{0000-0003-3381-058X}}
\authorrunning{J. Baligács et al.}
%
\institute{TU Darmstadt, Germany \\
\email{\{baligacs,disser,mosis,weckbecker\}@mathematik.tu-darmstadt.de}}
\maketitle              
\begin{abstract}
We consider the open online dial-a-ride problem, where transportation requests appear online in a metric space and need to be served by a single server.
The objective is to minimize the completion time until all requests have been served.
We present a new, parameterized algorithm for this problem and prove that it attains a competitive ratio of $1 + \varphi \approx 2.618$ for some choice of its parameter, where $\varphi$ is the golden ratio.
This improves the best known bounds for open online dial-a-ride both for general metric spaces as well as for the real line.
We also give a lower bound of~$2.457$ for the competitive ratio of our algorithm for any parameter choice.

\keywords{Online optimization \and Dial-a-ride \and Competitive analysis.}
\end{abstract}

\section{Introduction}

In the online dial-a-ride problem, transportation requests appear over time in a metric space $(M,d)$ and need to be transported by a single server. 
Each request is of the form $r=(a,b;t)$, appears
at its starting position $a\in M$ at its release time $t\geq0$, and
needs to be transported to its destination $b\in M$. The server starts
at a distinguished point $O\in M$, called the origin, can move at
unit speed, and has a capacity $c\in\mathbb{N}\cup\{\infty\}$ that
bounds the number of requests it is able to carry simultaneously.
Importantly, the server only learns about request~$r$ when it appears
at time $t$ during the execution of the server's algorithm. 
Moreover, the total number of requests is initially unknown and the server
cannot tell upon arrival of a request whether it is the last one.\footnote{If the server can distinguish the last request, it can start an optimal schedule once all requests are released, achieving a completion time of at most twice the optimum.}
Requests do not have to be served in the same order in which they appear.

The objective of the open dial-a-ride problem is to minimize the time
until all requests have been served\emph{,} by loading each request
$r=(a,b;t)$ at point $a$ no earlier than time $t$, transporting
it to point $b$ and unloading it there. We consider the non-preemptive
variant of the problem, meaning that requests may only be unloaded
at their respective destinations. Note that, in contrast to the closed
variant of the problem, we do not require the server to return to
the origin after serving the last request.

As usual, we measure the quality of a (deterministic) online algorithm
in terms of competitive analysis. That is, we compare the completion
time $\textsc{Alg}(\sigma)$ of the algorithm to an offline
optimum completion time $\textsc{Opt}(\sigma)$ over all request sequences~$\sigma$.
Here, the offline optimum is given by the best possible completion
time that can be achieved if all requests are known (but not released)
from the start. The (strict) competitive ratio of the algorithm is
given by $\rho:=\sup_{\sigma}\textsc{Alg}(\sigma)/\textsc{Opt}(\sigma)$.\footnote{We adopt a strict definition of the competitive ratio that requires
a bounded ratio for all request sequences, i.e., we do not allow an
additive constant.} Note that, in particular, the running time of an algorithm does not play a role in its competitive analysis. 

\paragraph*{\textbf{our results.}}

We present a parameterized online algorithm $\textsc{Lazy}(\alpha)$
and show that it improves on the best known upper bound for open online
dial-a-ride for $\alpha=\varphi$, where $\smash{\varphi=\frac{1+\sqrt{5}}{2}}$
denotes the golden ratio. We also show a lower bound on potential
improvements for other parameter choices. More precisely, we show
the following.
\begin{theorem}\label{thm:CR_of_lazy}
$\textsc{Lazy}(\varphi)$ has competitive ratio $1+\varphi\approx2.618$
for the open online dial-a-ride problem on general metric spaces for
any server capacity $c\in\mathbb{N}\cup\{\infty\}$. For every $\alpha\geq1$ and any~$c\in\mathbb{N}\cup\{\infty\}$,
$\textsc{Lazy}(\alpha)$ has competitive ratio at least $\max\{1+\alpha,2+2/(3\alpha)\}$, even if the metric space is the real line.
\end{theorem}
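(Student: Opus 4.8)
The plan is to treat the two bounds of \cref{thm:CR_of_lazy} separately: the \emph{upper} bound, that $\wait(\varphi)$ is $(1+\varphi)$-competitive, via competitive analysis against two independent lower bounds on $\opt$; and the \emph{lower} bound, that $\wait(\alpha)$ cannot do better than $\max\{1+\alpha,\,2+2/(3\alpha)\}$, via two adversarial families on the real line $\R$. I would prove the lower bound first, since it is the more self-contained half and also pins down why the analysis is tight at $\alpha=\varphi$ (there the two families cross near the value where $1+\varphi$ dominates, and minimizing the maximum over $\alpha$ reproduces the $2.457$ of the abstract at $\alpha=(3+\sqrt{33})/6$).

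For the upper bound I would fix a request sequence $\sigma$ and track the finitely many waiting phases of $\wait(\alpha)$. The idea is that $\wait$ only ever departs the origin $O$ on a committed schedule at a time $T$ that its waiting rule couples to the optimal work $\ell$ required for the currently known requests, so that $\alg(\sigma)=T+\ell'$ where $\ell'$ is the length of the \emph{final} executed schedule. I would then bound $\opt(\sigma)$ from below in two ways: (i) $\opt(\sigma)\ge \ell'$, since the final schedule serves a set of requests that $\opt$ must also serve from $O$; and (ii) $\opt(\sigma)\ge t^\ast$, where $t^\ast$ is the release time of the request triggering the final departure, plus the unavoidable cost of reaching and serving it. Combining these with $\alpha=\varphi$ and the identity $\varphi^2=1+\varphi$ should collapse the worst case to exactly $1+\varphi$. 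I expect the \textbf{main obstacle} to lie here: one must show that laziness never compounds the waiting time across phases (i.e., that an early departure charged against the then-current $\ell$ cannot later be invalidated by newly arriving requests in a way that stacks a second $\alpha$-factor on top), and this telescoping/charging argument against both lower bounds is the technical heart of the result.

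For the lower bound I would work on $\R$ with origin $O=0$ and use only requests of the form $(p,p;t)$, so that the server merely has to visit $p$ at or after time $t$; this makes capacity irrelevant and lets the same instances certify the bound for every $c\in\mathbb N\cup\{\infty\}$. To force $\rho\ge 1+\alpha$, I would release a single request far from the origin and invoke the defining waiting rule: $\wait(\alpha)$ keeps the server idle until a time that is an $\alpha$-multiple of the optimal work, while $\opt$, seeing that no further request arrives, serves it immediately. Writing $\ell$ for the optimal service time, $\wait(\alpha)$ finishes at $\approx(1+\alpha)\ell$ against $\opt=\ell$, giving the first bound in the limit.

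To force $\rho\ge 2+2/(3\alpha)$ I would use a two-phase construction exploiting a forced backtrack, which is the more delicate of the two families. First I release requests that make $\wait(\alpha)$ commit and travel in one direction; then, precisely at the moment its late start has carried it a controlled distance that way, I release a new request \emph{behind} the server, forcing it to reverse course and pay a detour on top of its already-delayed departure. Since $\opt$ knows both requests in advance, it serves them in the cheaper order. The remaining work is to choose the release position and time—both coupled to $\alpha$ through the waiting rule—so that the server is maximally committed when the second request appears, to compute $\alg(\sigma)$ and $\opt(\sigma)$ exactly, and to verify tightness; the factor $3$ in the denominator should emerge from balancing the length of the detour against the distance $\opt$ is able to save. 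Taking the pointwise maximum over the two families then yields $\max\{1+\alpha,\,2+2/(3\alpha)\}$ for every $\alpha\ge1$.
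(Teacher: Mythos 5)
Your proposal is an outline with a genuine gap in each half, and in both cases the gap sits exactly where the paper has to work hardest. For the upper bound, the two facts you want to combine --- (i) $\opt(\sigma)\ge \ell'$ for the final schedule length $\ell'$, and the claim that waiting never compounds across phases --- are the entire content of the theorem, and the justification you give for (i) is wrong in the critical case: the final schedule starts at the server's current position, which is the origin only if it is the first schedule or the previous one was interrupted. If the previous schedule $S^{(k-1)}$ ran to completion, then $S^{(k)}$ starts at the endpoint $p^{(k)}$ of $S^{(k-1)}$, and ``$\opt$ must also serve these requests from $O$'' gives nothing, since $\opt$ and the algorithm serve them from different points. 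The paper resolves precisely this by an induction over \emph{all} schedules, proving simultaneously a) $|S^{(i)}|\le\opt(t^{(i)})$ and b) $t^{(i)}+|S^{(i)}|\le(1+\alpha)\opt(t^{(i)})$, using the geometric growth $\opt(t^{(i+1)})>t^{(i)}\ge\alpha\cdot\opt(t^{(i)})$ together with the non-interruption inequality $t^{(i)}+|S^{(i)}|+\dist(p^{(i+1)},O)>\alpha\cdot\opt(t^{(i+1)})$, which bounds the detour $\dist(a,p^{(i+1)})$ between $\opt$'s first pickup and the schedule's start; part a) needs $\alpha\ge(1+\sqrt{17})/4$ and part b) needs $\alpha\ge\varphi$. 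Flagging this as ``the main obstacle'' does not discharge it; without this induction there is no proof.

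For the lower bound, your $1+\alpha$ family matches the paper's. But your decision to use only TSP-style requests $(p,p;t)$ breaks the $2+2/(3\alpha)$ family. The forced backtrack only pays off if the algorithm does \emph{not} interrupt its committed schedule when the far request appears; in the paper's construction $r_1=(0,1;\epsilon)$, $r_2=(0,-1;2\epsilon)$, $r_3=(2-3\alpha-\epsilon,2-3\alpha-\epsilon;3\alpha+\epsilon)$, this is guaranteed because the server is \emph{carrying} $r_2$ at that moment, so delivering it and returning to the origin takes until $3\alpha+2>\alpha\cdot\opt$. With $a=b$ requests the server never carries anything, so when $r_3$ arrives (raising $\opt$ from $3$ to $3\alpha+\epsilon$) the server, at distance $O(\epsilon)$ from the origin, can return by time $3\alpha+2\epsilon\le\alpha\cdot\opt=3\alpha^2+\alpha\epsilon$ for $\alpha>1$; the schedule \emph{is} interrupted, the server resets at the origin, waits until $3\alpha^2+\alpha\epsilon$, and then serves everything in time $3\alpha+\epsilon$, giving a ratio tending to $1+\alpha$ rather than $2+2/(3\alpha)$. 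Genuine transport requests with $a\ne b$ are essential here --- this is exactly the mechanism separating dial-a-ride from TSP lower bounds. Capacity-independence does not require TSP requests: in the paper's instance both transport requests start at the origin with opposite destinations, so extra capacity is useless anyway, which is all one needs to observe.
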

In particular, we obtain a lower bound on the competitive ratio of
our algorithm, independent of~$\alpha$.
\begin{corollary}\label{cor:global_lower_bound}
For every $\alpha\geq0$, $\textsc{Lazy}(\alpha)$ has competitive
ratio at least \linebreak ${3/2+\sqrt{11/12}\approx2.457}$ for open online dial-a-ride on the line.
\end{corollary}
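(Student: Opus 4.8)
The plan is to read \cref{thm:CR_of_lazy} as a pointwise lower bound $f(\alpha) := \max\{1+\alpha,\, 2+2/(3\alpha)\}$ on the competitive ratio of $\textsc{Lazy}(\alpha)$, valid for every $\alpha \geq 1$. To prove the corollary it then suffices to show that $\min_{\alpha \geq 1} f(\alpha) = 3/2 + \sqrt{11/12}$ and to treat the remaining range $0 \leq \alpha < 1$ separately; the bulk of the work is a single one-variable minimization.

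First I would observe that on $[1,\infty)$ the branch $1+\alpha$ is strictly increasing while $2+2/(3\alpha)$ is strictly decreasing, so $f$ is the maximum of an increasing and a decreasing function and is therefore minimized at the unique point where the two branches coincide. Setting $1+\alpha = 2 + 2/(3\alpha)$ and clearing denominators yields the quadratic $3\alpha^2 - 3\alpha - 2 = 0$, whose positive root is $\alpha^\star = (3+\sqrt{33})/6 \approx 1.457$. Substituting back gives $f(\alpha^\star) = 1 + \alpha^\star = 3/2 + \sqrt{33}/6 = 3/2 + \sqrt{11/12} \approx 2.457$, where I use $\sqrt{33}/6 = \sqrt{33/36} = \sqrt{11/12}$. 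Since $\alpha^\star \approx 1.457 > 1$, the minimizer lies strictly inside the range where \cref{thm:CR_of_lazy} applies, so this value is genuinely attained and is a valid lower bound for every $\alpha \geq 1$.

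It then remains only to cover $0 \leq \alpha < 1$, which the theorem does not address directly. Here I would invoke the adversarial construction underlying the second bound $2+2/(3\alpha)$ from the proof of \cref{thm:CR_of_lazy}: since that instance does not rely on $\alpha \geq 1$, it continues to force a competitive ratio of at least $2 + 2/(3\alpha) > 2 + 2/3 = 8/3 \approx 2.667 > 2.457$ on this range (with the ratio in fact growing without bound as $\alpha \to 0$). I expect this last point — confirming that the instance behind the $2+2/(3\alpha)$ bound remains valid below $\alpha = 1$, so that the uniform guarantee really holds for all $\alpha \geq 0$ — to be the only genuine subtlety; everything else reduces to the elementary optimization above.
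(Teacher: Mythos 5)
Your treatment of the range $\alpha\geq1$ is correct and is exactly the paper's argument: $\max\{1+\alpha,\,2+2/(3\alpha)\}$ is the maximum of an increasing and a decreasing function of $\alpha$, hence minimized where the branches cross, at $\alpha^\star=(3+\sqrt{33})/6\approx1.457$, with value $1+\alpha^\star=3/2+\sqrt{11/12}$. The genuine gap is your treatment of $0\leq\alpha<1$. Contrary to your claim, the instance behind the bound $2+2/(3\alpha)$ (\cref{thm:lower_bound_alpha-geq-1}) relies essentially on $\alpha\geq1$. It consists of $r_1=(0,1;\varepsilon)$, $r_2=(0,-1;2\varepsilon)$, $r_3=(2-3\alpha-\varepsilon,2-3\alpha-\varepsilon;3\alpha+\varepsilon)$, and its entire analysis rests on the identity $\opt=3\alpha+\varepsilon$. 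That identity needs $2-3\alpha-\varepsilon\leq-1$, i.e.\ essentially $\alpha\geq1$, so that the optimum, standing at $-1$ at time $3$ after serving $r_1$ and $r_2$, continues \emph{further left} and reaches $r_3$ exactly at its release time. For $\alpha<1$ the point $2-3\alpha-\varepsilon$ lies strictly to the right of $-1$ and the optimum changes completely: for $\alpha=1/2$, say, $r_3$ sits at $1/2-\varepsilon$, and the optimum can serve $r_1$, collect $r_3$ on the way back, and finish with $r_2$, so $\opt=3+\varepsilon$ rather than the claimed $3/2+\varepsilon$. On that instance $\wait(1/2)$ finishes by time $5+\varepsilon$ (or $6-\varepsilon$ under the other tie-breaking), a ratio of roughly $5/3$ to $2$, far below $2.457$; the construction proves nothing useful below $\alpha=1$, and both the optimum computation and the non-interruption analysis break.

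A sanity check would have flagged this: $\wait(0)$ never waits and never interrupts (it simply runs optimal schedules over the unserved requests back to back, an Ignore-type strategy), so its competitive ratio is certainly finite, whereas your claimed bound $2+2/(3\alpha)$ diverges as $\alpha\to0$. This is precisely why the paper proves a separate lower bound for $\alpha\in[0,1)$: \cref{thm:lower_bound_alpha-leq-1} uses a dedicated five-request construction to show a ratio of at least $1+3/(\alpha+1)>5/2>3/2+\sqrt{11/12}$ on that range, and it is this proposition (not an extension of \cref{thm:lower_bound_alpha-geq-1}) that the paper combines with the minimization over $\alpha\geq1$ to obtain the corollary. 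Your proof needs this, or some substitute argument for $\alpha<1$, to go through.
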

Our lower bound also narrows the range of parameter choices that could
allow improved competitive ratios.
\begin{corollary}\label{cor:phi+1_lower_bound}
For $\alpha\notin(2\varphi/3,\varphi)\approx(1.078,1.618)$, $\textsc{Lazy}(\alpha)$
has competitive ratio at least $1+\varphi$ for open online dial-a-ride on the line.
\end{corollary}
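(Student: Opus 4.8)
The plan is to derive this purely from the lower bound $\max\{1+\alpha,\,2+2/(3\alpha)\}$ established in \cref{thm:CR_of_lazy}, by showing that this quantity is at least $1+\varphi$ for every relevant $\alpha$ with $\alpha\notin(2\varphi/3,\varphi)$. The single algebraic fact driving everything is the defining relation $\varphi^2=\varphi+1$ of the golden ratio, equivalently $\varphi-1=1/\varphi$. The argument then splits into the two regimes $\alpha\geq\varphi$ and $\alpha\leq 2\varphi/3$, each of which is handled by one of the two terms in the maximum.

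First I would treat the regime $\alpha\geq\varphi$. Here the first term alone already suffices, since $1+\alpha\geq 1+\varphi$, and hence $\max\{1+\alpha,\,2+2/(3\alpha)\}\geq 1+\varphi$. By \cref{thm:CR_of_lazy}, this is a lower bound on the competitive ratio of $\textsc{Lazy}(\alpha)$ on the line, as claimed.

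Next I would treat the regime $\alpha\leq 2\varphi/3$, using the second term $2+2/(3\alpha)$, which is positive and strictly decreasing in $\alpha$. The key computation is the chain of equivalences
\[
2+\frac{2}{3\alpha}\geq 1+\varphi
\iff \frac{2}{3\alpha}\geq \varphi-1
\iff \alpha\leq \frac{2}{3(\varphi-1)}=\frac{2\varphi}{3},
\]
where the final equality uses $\varphi-1=1/\varphi$ and the flipped inequality direction reflects that $2/(3\alpha)$ decreases in $\alpha$. Thus $\alpha\leq 2\varphi/3$ already guarantees $2+2/(3\alpha)\geq 1+\varphi$, so again the maximum is at least $1+\varphi$.

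Combining the two regimes, the maximum is at least $1+\varphi$ precisely when $\alpha\leq 2\varphi/3$ or $\alpha\geq\varphi$, i.e.\ exactly when $\alpha\notin(2\varphi/3,\varphi)$, which yields the statement. There is no genuine obstacle here; the only point requiring care is to confirm that the golden-ratio identity converts the threshold $2/(3(\varphi-1))$ into precisely the stated endpoint $2\varphi/3\approx 1.078$, so that the excluded open interval lines up exactly with the two complementary closed regimes. One should also note that the left regime intersects the range $\alpha\geq 1$ in which \cref{thm:CR_of_lazy} applies, since $2\varphi/3>1$, making the interval $[1,2\varphi/3]$ non-empty.
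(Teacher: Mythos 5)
Your argument is correct exactly on the range where it applies, but it does not prove the full statement: it leaves out all $\alpha\in[0,1)$, and those parameter values are part of the claim. The lower bound $\max\{1+\alpha,\,2+2/(3\alpha)\}$ of \cref{thm:CR_of_lazy} is only asserted for $\alpha\geq1$, because the underlying construction (\cref{thm:lower_bound_alpha-geq-1}) requires $\alpha\geq1$. You note this restriction yourself at the end, but then conclude the corollary for all $\alpha\notin(2\varphi/3,\varphi)$, which silently extends the bound $2+2/(3\alpha)$ to $\alpha<1$ where it has not been established --- indeed, for small $\alpha$ that expression is arbitrarily large, which is certainly not a proven lower bound on the competitive ratio of $\textsc{Lazy}(\alpha)$. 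That the corollary is meant to cover $\alpha<1$ is clear from \cref{cor:global_lower_bound} (``for every $\alpha\geq0$'') and from the paper's own proof, which treats the subintervals of $[0,1)$ separately.

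The paper closes exactly this gap with two additional constructions: \cref{thm:lower_bound_alpha-leq-1} gives a lower bound of $1+3/(\alpha+1)$ for $\alpha\in[0,1)$, which is at least roughly $2.768>1+\varphi$ for $\alpha\leq0.695$, and \cref{thm:lower_bound_large_alpha-leq-1} gives $2+\alpha+\frac{1-\alpha}{2+3\alpha}$, which is at least roughly $2.768>1+\varphi$ for $0.695<\alpha<1$. Note that the first of these alone would not suffice near $\alpha=1$, since $1+3/(\alpha+1)$ tends to $2.5<1+\varphi$ there, so both propositions are genuinely needed. Your handling of the regimes $[1,2\varphi/3]$ and $[\varphi,\infty)$, including the conversion $2/(3(\varphi-1))=2\varphi/3$ via $\varphi-1=1/\varphi$, coincides with the paper's argument and is fine; to complete the proof you must add the two cases covering $\alpha\in[0,1)$.
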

Our upper bound improves the best known upper bound
of~2.70 for general metric spaces~\cite{Birx/20}, and even the best
known upper bound of 2.67 for the real line~\cite{BirxDisser/19b}.
Figure~\ref{fig:state-of-art} gives an overview over the previous
upper bounds for open online dial-a-ride. Note that, in contrast to
previous results, $\textsc{Lazy}(\alpha)$ is not a so-called schedule-based
algorithm as defined in~\cite{Birx/20}, because it interrupts schedules.

We note that an upper bound of $\varphi+1\approx2.618$ was already
claimed in~\cite{Lipmann/03} for the Wait-or-Ignore algorithm, but the proof in~\cite{Lipmann/03} is inconclusive. 
While the general idea of our algorithm is similar to
Wait-or-Ignore, our implementation is more
involved and avoids issues in the analysis that are not being addressed
in~\cite{Lipmann/03}.
In particular, Wait-or-Ignore only waits at the origin, while our algorithm crucially also waits at other locations.

\paragraph*{\textbf{related work.}}

As listed in Figure~\ref{fig:state-of-art}, the best previously
known upper bound for open online dial-a-ride of~2.70 was shown by
Birx~\cite{Birx/20} and a slightly better bound of 2.67 for the
line was shown by Birx et al.~\cite{BirxDisser/19b}. In this paper,
we improve both bounds to $1+\varphi\approx2.618$. A better upper
bound of $1+\sqrt{2}\approx2.41$ is known for the preemptive variant
of the problem, due to Bjelde et al.~\cite{BjeldeDisserHackfeldEtal/20}.
The TSP problem is an important special case of dial-a-ride, where
$a=b$ for every request $(a,b;t)$, i.e., requests just need to be
visited. Bonifaci and Stougie gave an upper bound for open online
TSP on general metric spaces of $2.41$. Bjelde et al.~\cite{BjeldeDisserHackfeldEtal/20}
were able to show a tight bound of 2.04 for open online TSP on the
line. Birx et al.~\cite{BirxDisser/19b} showed that open online
dial-a-ride is strictly more difficult than open online TSP by providing
a slightly larger lower bound of 2.05. Weaker lower bounds for the
half-line were given by Lipmann~\cite{Lipmann/03}.
%
%
%
%
%

\begin{figure}
\begin{centering}
\begin{tikzpicture}[xscale=8]
  \node (O) at (2, 0) {};
  \node (L) at (3.4, 0) {};
  \node[rotate=90] (M) at (2.3,0) {$\boldsymbol\approx$};

 \node[align=right,execute at begin node=\setlength{\baselineskip}{10px}] (LB) at (2.22, 0.5) {lower\\bounds};
  \node[align=left,execute at begin node=\setlength{\baselineskip}{10px}] (UB) at (2.38,0.46) {upper\\bounds};

  \tikzstyle{tick}=[rectangle, thick, inner sep=0pt, minimum height=4pt, minimum width=0pt, draw]
  \node[tick, label={[label distance=.1cm]above:$\hspace{-.5cm} (2.04)\,\text{\cite{BjeldeDisserHackfeldEtal/20}}$}] (x204) at (2.03,0.06) {};
  \node[tick, label={[label distance=.1cm]below:$\hspace{.5cm}(2.05)$\,\cite{BirxDisser/19b}}] (x205) at (2.05,-0.06) {};

  \node[tick, label={[label distance=0.1cm]above:$3.41$\,\cite{Krumke0}}] (y341) at (3.31, 0.06) {};
  \node[tick, label={[label distance=0.1cm]below:($2.94$)\,\cite{BirxDisser/20}}] (y294) at (2.94, -.06) {};
  \node[tick, label=above:{$\hspace{.3cm}2.70$\,\cite{Birx/20}}] (y270) at (2.7, 0.06) {};
  \node[tick, label={[label distance=0.1cm]below:($2.67$)\,\cite{BirxDisser/19b}}] (y267) at (2.67, -0.06) {};

  \node[tick, very thick, label={[label distance=0.08cm]above:{$\hspace{-.3cm}\textbf{2.618}$}}] (y2618) at (2.618, 0.06) {};

  \draw[thick] (O) -> (M);
  \draw[->, thick] (M) -> (L);
\end{tikzpicture}
\par\end{centering}
\caption{Overview of the state-of-the-art for the open online dial-a-ride problem.
Bounds in parentheses were shown for the real line. Note that lower
bounds on the real line carry over to general metric spaces and the
converse is true for upper bounds. In particular, our upper bound
also holds on the line.\label{fig:state-of-art}}
\end{figure}
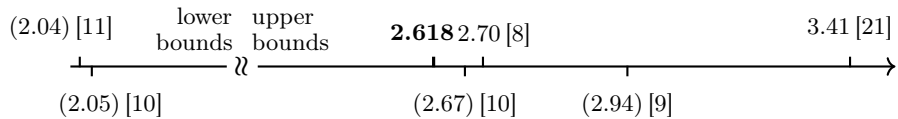

The competitive analysis of the closed online dial-a-ride problem
on general metric spaces has proven to be structurally much simpler
and conclusive results are known: The best possible competitive ratio
of 2 is achieved by the conceptually clean Smartstart algorithm, as
shown by Ascheuer et al.~\cite{AscheuerKrumkeRambau/00} and Feuerstein
and Stougie~\cite{FeuersteinStougie/01}. Ausiello et al.~\cite{Ausiello/01}
gave a matching lower bound already for TSP. The situation is more
involved on the line. Bjelde et al.~\cite{BjeldeDisserHackfeldEtal/20}
gave a sophisticated algorithm for closed online TSP on the line that
tightly matches the lower bound of~1.64 shown by Ausiello et al.~\cite{Ausiello/01}.
Birx~\cite{Birx/20} separated closed online dial-a-ride on the line
by giving a lower bound of~1.76. No better upper bound than 2 is
known in this setting, not even for preemptive algorithms. Blom et
at.~\cite{BlomKrumkePaepeStougie/01} gave a tight bound of 1.5 for
the half-line, and the best known lower bound of 1.71 for closed dial-a-ride
on the half-line is due to Ascheuer et al.~\cite{AscheuerKrumkeRambau/00}.

Clearly, most variants of the online dial-a-ride problem have resisted
tight competitive analysis for many years. As a remedy, several authors
have resorted to considering restricted classes of algorithms, restricted
adversary models, or resource augmentation. In that vein, Blom et
al.~\cite{BlomKrumkePaepeStougie/01} considered ``zealous'' (or
``diligent'') algorithms that do not stay idle if there are unserved
requests, and Birx~\cite{Birx/20} derived stronger lower bounds
for ``schedule-based'' algorithms that subdivide the execution into
schedules that may not be interrupted. Examples of restricting the
adversary include ``non-abusive'' or ``fair'' models introduced
by Krumke et al.~\cite{Krumke/02} and Blom et al.~\cite{BlomKrumkePaepeStougie/01},
that force the optimum solution to stay in the convex hull of all
released requests. In the same spirit, Hauptmeier et al.~\cite{HauptmeierKrumkeRabau/00}
adopted a ``reasonable load'' model, which requires that the length
of an optimum schedule for serving all requests revealed up to time
$t$ is bounded by a function of $t$. In terms of resource augmentation,
Allulli et al.~\cite{AllulliAusielloLaura/05} and Ausiello et al.~\cite{AusielloAllulliBonifaciLaura/06}
considered a model with ``lookahead'', where the algorithm learns
about requests before they are released. In contrast, Lipmann et
al.~\cite{Lipmann/04} considered a restricted information model
where the server learns the destination of a request only upon loading
it. Bonifaci and Stougie~\cite{BonifaciStougie/08} and Jaillet and
Wagner~\cite{JailletWagner/08} considered resource augmentation
regarding the number of servers, their speeds, and their capacities.

While we concentrate on minimizing completion time, other objectives
have been studied: Krumke~\cite{Krumke0} presented first results
for randomized algorithms minimizing expected completion time, Krumke
et al.~\cite{KrumkePaepePoensgenStougie/03} and Bienkowski et al.~\cite{BienkowskiKL/21,BienkowskiLiu/19}
minimized the sum of completion times, Krumke et al.~\cite{KrumkePaepePoensgenEtal/06,Krumke/02}
and Hauptmeier et al.~\cite{HauptmeierKrumkeRabau/00} minimized
the flow time, and Yi and Tian~\cite{YiTian/05} maximized the number
of served requests (with deadlines). Regarding other metric spaces,
Jawgal et al.~\cite{JawgalMuralidharaSrinivasan/19} considered online
TSP on a circle. Various generalizations of online dial-a-ride have
been investigated: Ausiello et al.~\cite{AusielloDemangeLauraPaschos/04}
introduced the online quota TSP, where only a minimum weighted fraction
of requests need to be served, and, similarly, Jaillet and Lu~\cite{JailletLu/11,JailletLu/14}
adopted a model where requests can be rejected for a penalty in the
objective. Jaillet and Wagner~\cite{JailletWagner/08} and Hauptmeier
et al.~\cite{HauptmeierKrumkeRambauWirth/01} allowed precedence
constraints between requests.

\section{Notation and definition of the algorithm}

Let $\sigma=(r_1, \dots, r_n)$ be a sequence of requests $r_i=(a_i,b_i;t_i)$ with release times $0<t_1 < \dots < t_n$. Note that we do not allow multiple requests to appear at the same time or a request to appear at time 0 but this is not a restriction as the release times can differ by arbitrarily small values. We let $\opt(t)$ denote the completion time of the offline optimum over all requests released not later than~$t$. A \emph{schedule} is a sequence of actions of the server, specifying when requests are collected and unloaded, how the server moves, and, in particular, when the server stays stationary. Let $\opt [t]$ denote an optimal schedule with completion time $\opt (t)$. We say that a server \emph{visits} point $p\in M$ at time~$t\geq0$ if the server is in position~$p$ at time~$t$.

The rough idea of our algorithm is to wait until we gather several requests and then start a schedule serving them. If a new request arrives during the execution of a schedule, it would be desirable to include it in the server's plan. Therefore, we check whether we can ``reset'' the server's state in a reasonable time, i.e., deliver all currently loaded requests and return to the origin, so that we can compute a new schedule. If this is not possible, we keep following the current schedule and consider the new requests later.

We introduce some notation to capture this more formally. Let $R$ be a set of requests and $x \in M$. Then, the schedule $S(R,x)$ is the shortest schedule starting from point $x$ and serving all requests in $R$. Note that this schedule can ignore the release times of the requests as we will only compute it after all requests in $R$ are released. As it is not beneficial to wait at some point during the execution of a schedule, the walked distance in $S(R,x)$ is the same as the time needed to complete it. We denote its length by $|S(R,x)|$.

Now, we can describe our algorithm. The factor $\alpha \geq 1$ will be a measure of how long we wait before starting a schedule.
A precise description of the algorithm is given below (cf. \hyperref[lazy]{Algorithm 1}). In short, whenever a new request $r=(a,b;t)$ arrives, we determine whether it is possible to serve all loaded requests and return to the origin in time $\alpha\cdot\opt(t)$. If this is possible, we do so. In this case, we say that the schedule was \emph{interrupted}. Otherwise, we ignore the request and consider it in the next schedule. Before starting a new schedule, we wait at least until time $\alpha\cdot\opt(t)$.

In the following, the algorithm $\wait(\alpha)$ with waiting parameter $\alpha\geq1$ is described. The first part of the algorithm is invoked whenever a new request $r=(a,b;t)$ is released, and the second part of the algorithm is invoked whenever the algorithm becomes idle, i.e., when the server has finished waiting or finished a schedule. We denote by $t$ the current time, by~$R_t$ the set of unserved requests at time~$t$ and by~$p_t$ the position of the server at time~$t$. There are three commands that can be executed, namely $\deliver$, $\waituntil(t')$, and $\schedule (S)$. Whenever one of these commands is invoked, the server aborts what it is currently doing and executes the new command. The command $\deliver$ instructs the server to deliver all loaded requests and return to the origin in an optimal way. The command $\waituntil(t')$ orders the server to remain at its position until time $t'$ and the command $\schedule(S)$ tells the server to execute schedule $S$. Once the server completes the execution of a command, it becomes \emph{idle}.

\SetNlSty{}{}{} 

\begin{algorithm}[h]\label{lazy}
	\caption{$\wait(\alpha)$}
	initialize: $i \gets 0$
	
	\vspace{.1cm}
	\hrule
	\vspace{.1cm}
	
	\emph{upon receiving request $r=(a,b;t)$:}\\
	\If{server can serve loaded requests and return to $\o$ until time $\alpha \cdot \opt (t)$}{
		\textbf{execute} $\deliver$ \hfill\Comment{interrupt $S^{(i)}$} 
	}
	
	\hrule
	\vspace{.1cm}
	
	\emph{upon becoming idle:}\\
	\uIf{$t<\alpha\cdot\opt(t)$}{
		\textbf{execute} $\waituntil(\alpha \cdot \opt (t))$} 
	{\ElseIf{$R_t\neq\emptyset$}{
			$i\gets i+1$, $R^{(i)}\gets R_t$, $t^{(i)}\gets t$, $p^{(i)}\gets p_t$ \\
			$S^{(i)}\gets S(R^{(i)},p^{(i)})$\\
			\textbf{execute} $\schedule ({S^{(i)})}$
		}} 
		
\end{algorithm}

\section{Analysis of $\wait$}

In this section, we analyze $\wait(\alpha)$ and show that $\wait(\alpha)$ is $1+\alpha$ competitive for $\alpha\geq\varphi=\frac{1+\sqrt{5}}{2}$. This implies in particular that $\wait(\varphi)$ is $(1+\varphi)$-competitive, i.e., that the first part of Theorem~\ref{thm:CR_of_lazy} holds.

\begin{theorem}
	For $\alpha \geq \varphi  \approx 1.618$, $\wait (\alpha)$ is $(1+\alpha)$-competitive for the open dial-a-ride problem on general metric spaces for any server capacity $c\in \N\cup\{\infty\}$.
\end{theorem}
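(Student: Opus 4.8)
The plan is to bound the completion time via the last schedule. Let $S^{(k)}$ be the final schedule executed by $\wait(\alpha)$, so that $\alg(\sigma)=t^{(k)}+|S^{(k)}|$. I would begin with two structural observations. First, no request can be released during the execution of $S^{(k)}$: such a request would, upon arrival, either trigger a $\deliver$ (interrupting $S^{(k)}$) or be ignored and remain unserved afterwards, and in both cases the server would start a further schedule, contradicting the maximality of $k$. Hence every request is released by time $t^{(k)}$, so $\opt(t^{(k)})=\opt(\sigma)=:\opt$. Second, the idle-rule guarantees that every schedule $S^{(i)}$ satisfies $t^{(i)}\ge\alpha\cdot\opt(t^{(i)})$ at its start, with equality precisely when the server has just finished waiting. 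Throughout I would use the elementary lower bounds $\opt\ge|S(R,O)|$ for every set $R$ of released requests and $\opt\ge d(O,p)$ whenever $p$ is a start or destination of a released request, since the optimum serves all requests starting from $O$.

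The tight and easiest case is when the server waits at $O$ before $S^{(k)}$: then $t^{(k)}=\alpha\opt$ and $|S^{(k)}|=|S(R^{(k)},O)|\le\opt$, giving $\alg\le(1+\alpha)\opt$ directly. The difficulty arises in the two (possibly overlapping) remaining situations. The server may start $S^{(k)}$ without waiting, in which case $t^{(k)}$ equals the completion time $C_{k-1}:=t^{(k-1)}+|S^{(k-1)}|$ of the previous schedule and may exceed $\alpha\opt$; and $S^{(k)}$ may start from a point $p^{(k)}\ne O$, namely the endpoint of an uninterrupted $S^{(k-1)}$, in which case the crude estimate $|S^{(k)}|\le d(O,p^{(k)})+|S(R^{(k)},O)|\le2\opt$ is too weak. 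I would control $t^{(k)}$ in the no-wait situation by bounding $C_{k-1}$ through the analysis of $S^{(k-1)}$ (its start time obeys the same invariant and its length obeys $|S^{(k-1)}|\le\opt+d(O,p^{(k-1)})$), thereby reducing the task to a joint bound on $C_{k-1}+|S^{(k)}|$.

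The core of the argument is to show that $t^{(k)}$ (equivalently $C_{k-1}$) and $|S^{(k)}|$ cannot both be large. The key resource is the non-interruptibility of the requests served by $S^{(k)}$: each request of $R^{(k)}$ arrived while the server was executing $S^{(k-1)}$ and did not trigger a $\deliver$, meaning the server could not deliver its load and return to $O$ within $\alpha\cdot\opt(\cdot)$ at that moment. I would combine this with the single-tour lower bound $\opt\ge|S(R^{(k-1)}\cup R^{(k)},O)|$, which ties both request sets to one optimal traversal from $O$ and forces the requests of $R^{(k)}$ to lie within the reach already covered by $S^{(k-1)}$; this is what keeps $|S(R^{(k)},p^{(k)})|$ cheap from the far endpoint $p^{(k)}$, even though $d(O,p^{(k)})$ may be as large as $\opt$. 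Feeding the resulting inequalities, together with the no-wait condition $C_{k-1}\ge\alpha\opt$ and $d(O,p^{(k)})\le\opt$, into the estimate for $\alg$ should reduce the claim to checking that a handful of linear expressions in $\opt$ are each at most $(1+\alpha)\opt$.

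The final step is algebraic: the worst of these expressions is bounded by $(1+\alpha)\opt$ exactly when $\alpha^2\ge\alpha+1$, that is, when $\alpha\ge\varphi$, which is where the hypothesis and the golden ratio enter. I expect the main obstacle to be precisely the non-origin, no-wait case: turning the intuition ``$R^{(k)}$ is already covered by $S^{(k-1)}$'' into a clean inequality valid in an arbitrary metric space, using only the triangle inequality, the single-tour bound, and non-interruptibility, rather than any geometry of the line. Finally, I would note that the whole argument is capacity-agnostic, since all schedule lengths $|S(\cdot,\cdot)|$ and the optimum respect the given capacity $c\in\N\cup\{\infty\}$ and none of the inequalities depends on $c$, so the bound $1+\alpha$ holds for every capacity.
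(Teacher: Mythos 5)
Your high-level setup (reduce to the last schedule, dispose of the origin-wait case, isolate the ``no-wait'' and ``non-origin start'' difficulties) and several of your ingredients (non-interruptibility, the triangle inequality, capacity-agnosticism, $\alpha^2\ge\alpha+1$ as the source of $\varphi$) do match the paper, but there is a genuine gap at the step where you ``control $t^{(k)}$ in the no-wait situation by bounding $C_{k-1}$''. The only control on start times you state is the invariant $t^{(i)}\ge\alpha\cdot\opt(t^{(i)})$, which is a \emph{lower} bound; to bound $C_{k-1}=t^{(k-1)}+|S^{(k-1)}|$ from above you need an upper bound on $t^{(k-1)}$, and $S^{(k-1)}$ may itself have started without waiting, at the completion of $S^{(k-2)}$, and so on down an arbitrarily long chain. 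So the argument cannot stay local to the last one or two schedules: it has to be an induction over \emph{all} schedules, which is exactly how the paper proceeds, proving for every $i$ both (a) $|S^{(i)}|\le\opt(t^{(i)})$ and (b) $t^{(i)}+|S^{(i)}|\le(1+\alpha)\cdot\opt(t^{(i)})$.

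Moreover, even in the benign sub-case where $S^{(k-1)}$ did start after waiting (so $t^{(k-1)}=\alpha\opt(t^{(k-1)})$), your stated length bound $|S^{(k-1)}|\le\opt+d(O,p^{(k-1)})\le 2\opt$ is quantitatively too weak: it only gives $t^{(k)}=C_{k-1}\le(2+\alpha)\opt$, which exceeds $(1+\alpha)\opt$ before $|S^{(k)}|$ is even added. The missing key fact is the geometric growth of the optimum between consecutive uninterrupted schedules: every request of $R^{(i+1)}$ is released after $t^{(i)}$, hence $\opt(t^{(i+1)})>t^{(i)}\ge\alpha\opt(t^{(i)})$. This is what lets the paper discount the previous schedule's completion time, $C_{k-1}\le(1+\alpha)\opt(t^{(k-1)})<\frac{1+\alpha}{\alpha}\opt$, so that together with the sharp invariant (a) one gets $t^{(k)}+|S^{(k)}|\le\bigl(2+\frac{1}{\alpha}\bigr)\opt\le(1+\alpha)\opt$ exactly when $\alpha\ge\varphi$. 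Note that even with this growth fact, your $2\opt$ bound on $|S^{(k-1)}|$ would only yield $2+\frac{2}{\alpha}\le 1+\alpha$, i.e., $\alpha\ge 2$; the sharp bound (a) for $S^{(k-1)}$ in turn requires the non-interruption argument at level $k-1$ and the upper bound (b) at level $k-2$ --- again forcing the full induction. Your non-interruptibility idea is the right tool, but for the \emph{other} difficulty: the paper uses it (via the first request of $R^{(k)}$ that $\opt[t^{(k)}]$ collects, together with the induction hypothesis) to prove $|S^{(k)}|\le\opt(t^{(k)})$ despite $p^{(k)}\ne O$; it does not tame $t^{(k)}$.
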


\begin{proof}
	For a given request sequence $(r_1,\dots,r_n)$, we denote the number of schedules started by $\wait(\alpha)$ by $k\leq n$. Let $S^{(i)}$, $t^{(i)}$, $p^{(i)}$, and $R^{(i)}$ be as defined in the algorithm, i.e., $S^{(i)}$ is the $i$-th schedule started by $\wait(\alpha)$, $t^{(i)}$ is its starting time, $p^{(i)}$ its starting position, and $R^{(i)}$ is the set of requests served by~$S^{(i)}$. Observe that some schedules might be interrupted so that $R^{(1)}, \dots, R^{(k)}$ are not necessarily disjoint. Also observe that we have $p_1=\o$ and, for $i>1$, $p^{(i)}$ is either the ending position of $S^{(i-1)}$ or $\o$ if $S^{(i-1)}$ was interrupted. 
	
	We show by induction on $i$ that, for all $i \in \{1, \dots, k \}$,
	\begin{enumerate}[a)]
		\item $|S^{(i)}| \leq \opt (t^{(i)})$, and
		\item $t^{(i)} + |S^{(i)}| \leq (1+\alpha) \cdot \opt(t^{(i)})$.
	\end{enumerate}
	Note that this completes the proof since the last schedule is completed at time $t^{(k)} + |S^{(k)}|$ and since $\opt(t^{(k)})$ is the completion time of the offline optimum over all requests.
	
	Before starting the induction, let us make some observations. Since the server does not start a schedule at time $t$ if $t<\alpha\cdot \opt (t)$, we have
	\begin{equation}\label{rtime}
	t^{(i)} \geq \alpha \cdot \opt (t^{(i)})
	\end{equation}
	for all $i \in \{1, \dots, k\}$. Further, for every request $r=(a,b;t) \in R^{(i+1)} \setminus R^{(i)}$, we have ${t > t^{(i)}}$ because $R^{(i)}$ contains all unserved requests released until time $t^{(i)}$. Moreover, ${R^{(i+1)} \setminus R^{(i)} \neq \emptyset}$ because otherwise $S^{(i)}$ is not interrupted and we have $R^{(i+1)}=\emptyset$, contradicting that the algorithm starts $S^{(i+1)}$. Therefore, for all $i \in \{1, \dots, k-1\}$,
	\begin{equation}\label{optincreases}
	\opt(t^{(i+1)}) > t^{(i)} \overset{\text{\cref{rtime}}}{\geq} \alpha \cdot \opt (t^{(i)}).
	\end{equation}
	
	Now, let us start the induction.\\
	\emph{Base case:} a)  Since $\opt [t^{(1)}]$ is a schedule serving all requests in $R^{(1)}$ starting from $\o$, possibly with additional waiting times, we have 
	\begin{equation*} 
	|S^{(1)}|=|S(R^{(1)}, p^{(1)}) |=|S(R^{(1)}, \o) | \leq \opt(t^{(1)}).
	\end{equation*} 
	
	b) Consider the time $t^{(1)}$ at which schedule $S^{(1)}=S(R^{(1)},\o)$ is started, and let $t'\leq t^{(1)}$ denote the largest release time of a request in $R^{(1)}$. In particular, no requests are released in the time period $(t',t^{(1)}]$ and thus $\opt(t')=\opt(t^{(1)})$. When the request at time $t'$ is released, the server is in $\o$ so that the command $\deliver$ is completed immediately. Therefore, the server becomes idle at time~$t'$ and the waiting time is set to $\alpha \cdot \opt(t')=\alpha \cdot \opt(t^{(1)})$. Observe that $\alpha\cdot\opt(t')>\opt(t')\geq t'$. The server becomes idle again and starts schedule $S^{(1)}$ precisely at time $t^{(1)}=\alpha \cdot \opt (t^{(1)})$.
	We obtain
	\begin{equation*}
	t^{(1)}+|S^{(1)}| \overset{\text{a)}}{\leq} (1+ \alpha) \cdot \opt (t^{(1)}).
	\end{equation*}
	\emph{Induction step:} Assume that a) and b) hold for some $i\in\{1,\dots,k-1\}$. We show that this implies that a) and b) also hold for $i+1$.
	
	First, consider the case that the schedule $S^{(i)}$ is interrupted. Then, we have $p^{(i+1)}=\o$ and \mbox{$t^{(i+1)}=\alpha \cdot \opt (t^{(i+1)})$}. It immediately follows that \linebreak ${S^{(i+1)}=|S(R^{(i+1)},p^{(i+1)})| \leq \opt(t^{(i+1)})}$ because $\opt[t^{(i+1)}]$ serves all requests in $R^{(i+1)}$ (among others) and starts in $\o$. With this, we obtain
	\begin{equation*}
	t^{(i+1)} + |S^{(i+1)}| \leq (1+ \alpha) \cdot \opt (t^{(i+1)}).
	\end{equation*}
	Therefore, a) and b) hold  for $i+1$ if $S^{(i)}$ is interrupted. For the rest of the proof, assume that $S^{(i)}$ is not interrupted.
	 
	Assume that $\opt[t^{(i+1)}]$ visits $p^{(i+1)}$ before collecting any request in $R^{(i+1)}$. Then, by definition of $S^{(i+1)}=S(R^{(i+1)}, p^{(i+1)})$, we immediately see that a) holds for $i+1$ because $\opt [t^{(i+1)}]$ needs to serve all requests in $R^{(i+1)}$ after visiting point $p^{(i+1)}$. Thus, for the proof of a), it suffices to consider the case that $\opt [t^{(i+1)}]$ collects some request in $R^{(i+1)}$  before visiting $p^{(i+1)}$. We denote the first request in~$R^{(i+1)}$ collected by $\opt [t^{(i+1)}]$ by $r=(a,b;t)$. 
	
	Since $S^{(i)}$ is not interrupted, we have $R^{(i+1)} \cap R^{(i)} = \emptyset$ and thus $t> t^{(i)}$. Together with \cref{rtime}, this implies that $\opt[t^{(i+1)}]$ collects $r$ at $a$ not earlier than time $t > t^{(i)} \geq \alpha \cdot \opt (t^{(i)})$. By definition of $r$ and $S(R^{(i+1)},a)$, this implies
	\begin{equation}\label{StartAtA}
	\opt (t^{(i+1)}) \geq t + |S(R^{(i+1)},a)| > \alpha \cdot \opt (t^{(i)}) + |S(R^{(i+1)},a)|.
	\end{equation} 
	Further, since we assumed that $\opt[t^{(i+1)}]$ visits $p^{(i+1)}$ after visiting $a$ later than $\alpha \cdot \opt (t^{(i)})$ and since the server needs at least time $\dist (a,p^{(i+1)})$ to get from $a$ to $p^{(i+1)}$, we have 
	\begin{equation}\label{detour}
	\opt(t^{(i+1)}) \geq \alpha \cdot \opt (t^{(i)}) + \dist (a,p^{(i+1)}).
	\end{equation}
	Let $t_\ell\leq t^{(i+1)}$ denote the largest release time of a request in $R^{(i+1)}$. No requests appears in the, possibly empty, time interval $(t_\ell,t^{(i+1)}]$. Thus, we have ${\opt(t_\ell)=\opt(t^{(i+1)})}$. Recall that the schedule $S^{(i)}$ is not interrupted. In particular, it is not interrupted at time $t_\ell$, i.e., at time $t_\ell$, the server cannot serve all loaded requests and return to the origin until time $\alpha\cdot\opt(t_\ell)$. At time $t_\ell$, the server can trivially serve all loaded requests in time $t^{(i)}+|S^{(i)}|$ by following the current schedule, which ends in $p^{(i+1)}$. This yields
	\begin{equation}\label{nointerrupt}
	t^{(i)}+|S^{(i)}| + \dist(p^{(i+1)}, \o) 
	> \alpha \cdot \opt (t_\ell)
	= \alpha \cdot \opt (t^{(i+1)}).
	\end{equation}
	Recall that $p^{(i+1)}$ is the ending position of $S^{(i)}$ and, therefore, it is the destination of a request in~$R^{(i)}$. Since $\opt [t^{(i)}]$ needs to visit $p^{(i+1)}$ and starts in $\o$, we have $\dist(p^{(i+1)}, \o)\leq \opt (t^{(i)})$. Further, by the induction hypothesis, we have $t^{(i)} + |S^{(i)}| \leq (1+\alpha) \cdot \opt(t^{(i)})$. This yields
	\begin{align*}
	(2+\alpha) \cdot \opt (t^{(i)})
	&\geq t^{(i)}+|S^{(i)}| + \dist(p^{(i+1)}, \o)\\
	\overset{\text{\cref{nointerrupt}}}&{>} \alpha \cdot \opt (t^{(i+1)}) \\
	\overset{\text{\cref{detour}}}&{\geq} 
	\alpha ^2  \cdot \opt (t^{(i)}) + \alpha \cdot  \dist(a,p^{(i+1)}),
	\end{align*} so that
	\begin{equation}\label{distax}
	\dist (a,p^{(i+1)}) < \frac{1}{\alpha} \cdot (2+\alpha-\alpha^2) \cdot \opt (t^{(i)}).
	\end{equation}
	The schedule $S^{(i+1)}$ starts in $p^{(i+1)}$ and needs to serve all requests in $R^{(i+1)}$. By applying the triangle inequality, we can conclude that
	\begin{align*}
	|S^{(i+1)}| &\leq \dist (p^{(i+1)}, a) +  |S(R^{(i+1)},a)|\\
	\overset{\text{\cref{StartAtA}}}&{<} \dist (p^{(i+1)}, a) + \opt (t^{(i+1)})  - \alpha \cdot \opt (t^{(i)}) \\
	\overset{\text{\cref{distax}}}&{<} \left(\frac{2}{\alpha}+1-2\alpha\right) \cdot \opt (t^{(i)}) + \opt (t^{(i+1)})\\
	&\leq \opt (t^{(i+1)}),
	\end{align*}
	where the last inequality holds because we have $\left(\frac{2}{\alpha}+1-2\alpha\right) \leq 0$ if \linebreak ${\alpha\geq \frac{1+\sqrt{17}}{4}\approx 1.2808}$.
	
	It remains to show that b) also holds for $i+1$. If the schedule $S^{(i)}$ is completed before time \mbox{$\alpha \cdot \opt (t^{(i+1)})$}, the schedule $S^{(i+1)}$ is started precisely at  time  $t^{(i+1)}=\alpha \cdot \opt (t^{(i+1)})$. Together with part~a), this yields the assertion. Therefore, assume that $S^{(i)}$ is not completed before time $\alpha \cdot \opt (t^{(i+1)})$. Then, the schedule $S^{(i+1)}$ is started as soon as $S^{(i)}$ is completed. Together with the induction hypothesis, this implies $t^{(i+1)}=t^{(i)}+|S^{(i)}|\leq(1+ \alpha) \cdot \opt (t^{(i)})$. Hence, the schedule $S^{(i+1)}$ can be completed in time
	\begin{align*}
	t^{(i+1)} + |S^{(i+1)}| &\leq 
	(1+ \alpha) \cdot \opt (t^{(i)}) + |S^{(i+1)}|\\
	\overset{\text{\cref{optincreases}, a)}}&{\leq} (1+ \alpha) \cdot\frac{1}{\alpha} \cdot  \opt (t^{(i+1)}) + \opt (t^{(i+1)})\\
	& = \left(\frac{1}{\alpha}+2\right) \cdot \opt (t^{(i+1)})\\
	&\leq (1+ \alpha)\cdot \opt (t^{(i+1)}),
	\end{align*}
	where the last inequality holds because we have $\frac{1}{\alpha}+2 \leq 1+ \alpha$ if $\alpha \geq \frac{1+\sqrt{5}}{2}=\varphi$.
\end{proof}

\section{Lower bound for $\wait$}

In this section, we provide lower bounds on the competitive ratio of $\wait(\alpha)$. We give a lower bound construction for $\alpha \geq 1 $ and a separate construction for $\alpha <1$. Together they show that $\wait(\alpha)$ cannot be better than $(3/2+\sqrt{11/12})$-competitive for all $\alpha\geq0$, i.e., that Corollary~\ref{cor:global_lower_bound} holds. Furthermore, they narrow the range of parameter choices that would lead to an improvement over the competitive ratio of~$\varphi+1$.

In the following constructions, we let the metric space $(M,d)$ be the real line, i.e., $M=\R$, $\o = 0$, and $d(a,b)=|a-b|$. Note that lower bounds on the line trivially carry over to general metric spaces. Moreover, our constructions work for any given server capacity $c\in\mathbb{N}\cup\{\infty\}$ because a larger server capacity does neither change the behavior of the optimum solution nor the behavior of $\wait$.

First, observe that, for any $\alpha \geq0$, $\wait (\alpha)$ has a competitive ratio of at least $1+\alpha$. This can be easily seen by observing the request sequence consisting of the single request $r_1=(1,1;\frac{1}{2})$. In this case, the offline optimum has completed the sequence by time~1, whereas $\wait (\alpha)$ waits in $\o$ until time $\max(\alpha, \frac{1}{2})$ and then moves to 1 and serves $r_1$ not earlier than $1+\alpha$.

\begin{lemma}\label{lem:1plusalpha}
For any $\alpha\geq0$, $\wait (\alpha)$ has a competitive ratio of at least $1+\alpha$ for the open online dial-a-ride problem on the line for any capacity $c\in\mathbb{N}\cup\{\infty\}$.
\end{lemma}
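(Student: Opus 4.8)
The plan is to exhibit a single adversarial request that already forces the ratio $1+\alpha$, formalizing the observation stated just above the lemma. Concretely, I take the request sequence $\sigma$ consisting of the single request $r_1 = (1,1;1/2)$ on the line, i.e., a request that only needs the point $1$ to be visited, released at time $1/2$.

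First I would pin down the offline optimum. Since all requests are known to the offline solution from the start, it can move straight from $\o = 0$ toward $1$, arriving at time $1$; as the release time $1/2$ has already passed, it serves $r_1$ upon arrival, so $\opt(\sigma)=\opt(1/2)=1$. In particular, the value $\opt(1/2)=1$ is what the algorithm will plug into its waiting rule.

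Next I would trace the execution of $\wait(\alpha)$ through the two parts of \hyperref[lazy]{Algorithm~1}. At time $t=1/2$ the request is released while the server sits idle at the origin carrying nothing, so the test in the first part (``serve loaded requests and return to $\o$ by time $\alpha\cdot\opt(t)$'') is met trivially and $\deliver$ finishes instantly, leaving the server idle at $\o$ at time $1/2$. Entering the second part, the waiting rule compares $t=1/2$ with $\alpha\cdot\opt(1/2)=\alpha$: if $\alpha>1/2$ the server executes $\waituntil(\alpha)$ and departs at time $\alpha$, and if $\alpha\le 1/2$ the condition $t<\alpha\cdot\opt(t)$ fails, so it starts $S(\{r_1\},\o)$ immediately at time $1/2$. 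In either branch the server leaves the origin at time $\max(\alpha,1/2)$, and since $S(\{r_1\},\o)$ is just the walk from $0$ to $1$ of length $1$, it completes $\sigma$ at time $\max(\alpha,1/2)+1$.

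Finally I would combine these: the completion time of $\wait(\alpha)$ on $\sigma$ is $\max(\alpha,1/2)+1$ while $\opt(\sigma)=1$, so the competitive ratio is at least $\max(\alpha,1/2)+1\ge 1+\alpha$. There is essentially no obstacle here; the only point requiring care is the faithful reading of the idle-phase rule — in particular that the strict inequality $t<\alpha\cdot\opt(t)$ renders the boundary case $\alpha=1/2$ harmless and that the $\max$ correctly captures both the ``wait'' and ``no wait'' branches — so that the departure time equals $\max(\alpha,1/2)$ uniformly, and the bound holds for every $\alpha\ge 0$ and every capacity $c\in\N\cup\{\infty\}$ (the capacity being irrelevant, as only a single request ever appears).
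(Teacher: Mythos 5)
Your proposal is correct and takes essentially the same approach as the paper: the paper's proof is exactly this construction with the single request $r_1=(1,1;\tfrac{1}{2})$, observing that $\opt=1$ while $\wait(\alpha)$ leaves the origin only at time $\max(\alpha,\tfrac{1}{2})$ and hence finishes no earlier than $\max(\alpha,\tfrac{1}{2})+1\geq 1+\alpha$. Your version merely traces the algorithm's two handlers more explicitly, which is a harmless elaboration.
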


Now, we give a construction for the case $\alpha \geq 1$.

\begin{proposition}\label{thm:lower_bound_alpha-geq-1}
	For $\alpha\geq1$, $\wait (\alpha)$ has a competitive ratio of at least $2+\frac{2}{3\alpha}$ for the open online dial-a-ride problem on the line.
\end{proposition}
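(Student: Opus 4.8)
The plan is to exhibit an explicit, $\alpha$-parametrized request sequence on the real line and to compare the completion time of $\wait(\alpha)$ directly against the offline optimum, choosing the free parameters so that the ratio is at least $2+2/(3\alpha)$. Since the trivial one-request bound only yields $1+\alpha$ (\cref{lem:1plusalpha}), and $2+2/(3\alpha)>1+\alpha$ precisely for the small values of~$\alpha$, the construction must force $\wait(\alpha)$ to execute a schedule that is genuinely longer than an optimal schedule from the origin; this is exactly the regime in which the inequality $|S^{(i)}|\le\opt(t^{(i)})$ from the upper-bound proof can fail for $\alpha<\varphi$.

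The design principle is to first \emph{lure} the server away from the origin and then release further requests at carefully chosen moments so that $\wait(\alpha)$ is forced onto a long, backtracking route while the optimum serves everything on a single efficient, well-timed sweep. Concretely, I would release an initial request that makes the server commit to one direction after its mandatory wait until $\alpha\cdot\opt(t)$, and then, once it is committed, release requests on the opposite side so that two competing goals are met: (i) at each release the interrupt condition \emph{fails}, i.e.\ the server cannot deliver its currently loaded requests and return to the origin within the budget $\alpha\cdot\opt(t)$, so it keeps following its current, now unfavorable, schedule rather than resetting to the origin; and (ii) the optimum can still serve the whole instance cheaply. Goal~(i) is what leaves the server ending a schedule far from the requests it must serve next, so that the next schedule $S(R,p)$ (and/or the waiting time $\alpha\cdot\opt$ that precedes it, and the backtracking distance) is large relative to $\opt$. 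I expect to rely on transport requests (with distinct source and destination), and on the fact that $\wait$ also waits \emph{away from} the origin, since with a load whose destination is far the cost of ``deliver and return to the origin'' stays above the interrupt budget even for $\alpha\ge1$.

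Carrying this out, I would (a) fix the requests and release times as functions of $\alpha$ and an overall scale; (b) trace the trajectory of $\wait(\alpha)$, using \cref{rtime}-type observations to pin down each waiting phase and verifying at every release that the interrupt condition comes out as intended; (c) read off $\wait(\alpha)$'s completion time as the last start time plus the length of the final schedule, or, when schedules run back-to-back with no intermediate wait, as the initial wait plus the total distance travelled; (d) upper-bound $\opt$ by writing down one concrete offline route that respects all release times and serves every request, and argue it is optimal; and (e) form the quotient and optimize the parameters, which I expect to collapse to exactly $2+2/(3\alpha)$ at the extremal choice.

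The main obstacle is the tension between goals~(i) and~(ii): forcing the server to commit (so it must travel out and then backtrack) pushes the opposite-side requests to be released late and/or far, which is exactly what tends to inflate $\opt$; conversely, releasing them early keeps $\opt$ small but finds the server still waiting at the origin, where it simply re-plans an optimal route and attains only the trivial ratio. Resolving this—calibrating positions, release times, and the loaded requests so that the interrupt budget is exceeded for the online server while the optimum's sweep stays short—is the heart of the argument, and checking that the algebra yields precisely $2+2/(3\alpha)$ rather than a weaker constant is the delicate last step.
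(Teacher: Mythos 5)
This is a plan, not a proof: the substance of the proposition is an explicit instance together with a verified trace of $\wait(\alpha)$ on it, and your proposal defers exactly that part (``I would (a) fix the requests\dots'', ``the delicate last step''). Nothing in the write-up pins down positions, release times, the non-interruption checks, or the final algebra, and you yourself identify the unresolved tension (forcing commitment without inflating $\opt$) as ``the heart of the argument.'' For comparison, the paper resolves this tension with three requests: $r_1=(0,1;\epsilon)$ and $r_2=(0,-1;2\epsilon)$, a \emph{symmetric} pair emanating from the origin, plus $r_3=(2-3\alpha-\epsilon,\,2-3\alpha-\epsilon;\,3\alpha+\epsilon)$. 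The symmetry is the missing idea: the algorithm waits until $3\alpha$ and must serve the pair in some order, so by symmetry one may assume it goes to $-1$ first and ends at $+1$; the adversary then places $r_3$ on the negative side, exactly at the point that $\opt$ (which serves the pair in the opposite order, ends at $-1$, and keeps moving) reaches at the moment $3\alpha+\epsilon$ when $r_3$ is released. This simultaneously keeps $\opt=3\alpha+\epsilon$ small and forces the server to backtrack from $+1$, giving completion time $6\alpha+2+\epsilon$ and ratio tending to $2+\frac{2}{3\alpha}$. The non-interruption check is $3\alpha+2>3\alpha^2+\alpha\epsilon$, which only holds for $\alpha<\frac12+\sqrt{11/12}$; larger $\alpha$ is handled by reduction to Lemma~\ref{lem:1plusalpha}, a case split you correctly anticipate but do not carry out.

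Two further inaccuracies in your framing are worth flagging. First, your claim that the construction must make some schedule of $\wait(\alpha)$ ``genuinely longer than an optimal schedule from the origin'' (i.e.\ violate $|S^{(i)}|\le\opt(t^{(i)})$) is not what happens in the paper's instance: there both schedules of $\wait(\alpha)$ have length at most $\opt$, and the loss comes entirely from the initial wait until $3\alpha$ combined with the first schedule ending at the point farthest from $r_3$. Second, your mechanism for defeating the interrupt condition (``a load whose destination is far'') is unnecessarily strong; in the paper the loaded request's destination is at distance $1$, and non-interruption holds purely because the budget $\alpha\cdot\opt$ is tight in the relevant range of $\alpha$. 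So even as a blueprint, the proposal points at somewhat the wrong mechanisms, and as it stands it does not establish the bound.
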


\begin{proof}
	First, observe that, for $\alpha \geq 1/2 + \smash{\sqrt{11/12}}$, we have $2+\frac{2}{3 \alpha} \leq 1+\alpha$ so that the assertion follows from Lemma~\ref{lem:1plusalpha}. Therefore, let $\alpha \in [1,1/2 + \smash{\sqrt{11/12}})$ and let $\epsilon >0$ be small enough such that $3 \alpha +2 >3 \alpha^2+\alpha \epsilon$. Note that this is possible because $3\alpha +2 >3 \alpha ^2$ for $\alpha \in [1,(1/2)+\sqrt{11/12})$.
	
	We construct an instance of the open online dial-a-ride problem, where the competitive ratio of $\wait(\alpha)$ converges to $2+\frac{2}{3\alpha}$ for $\epsilon\rightarrow0$ (cf. Figure \ref{fig:lower_bound_large_alpha}). We define the instance by giving the requests
	\begin{equation*}
		r_1=(0,1;\epsilon) \textrm{, } r_2=(0,-1;2 \epsilon) \textrm{, and } r_3=(2-3\alpha-\epsilon,2-3\alpha-\epsilon;3\alpha+\epsilon).
	\end{equation*}
	One solution is to first serve~$r_1$ and then $r_2$. This is possible in~$3$ time units and, after this, the server is in position~$-1$. Then, the server can reach point $2-3\alpha-\epsilon$ by time $3+(3\alpha+\epsilon-2-1)=3\alpha+\epsilon$. At this point in time, $r_3$ is released and can immediately be served. Thus, we have
	\begin{equation*}
		\opt:=\opt(3\alpha+\epsilon) = 3\alpha+\epsilon.
	\end{equation*}
	
	We now analyze what $\wait(\alpha)$ does on this request sequence. We have $\opt(2 \epsilon)=3$. Thus, the server waits in~$\o$ until time $3\alpha$. Since no new request arrives until this time, the server starts an optimal schedule serving $r_1$ and $r_2$. Without loss of generality, we can assume that $\wait(\alpha)$ starts by serving~$r_2$, because the starting positions and destinations of $r_1$ and~$r_2$ are symmetrical. At time $3\alpha+\epsilon$, request~$r_3$ is released, and the server has currently loaded $r_2$. Delivering $r_2$ and returning to the origin takes the server until time $3\alpha+2$. By definition of $\alpha$ and $\epsilon$, we have
	\begin{equation}\label{noreturn}
		3\alpha+2 > 3\alpha^2+\alpha\epsilon = \alpha\opt.
	\end{equation}
	This implies that the server is not interrupted in its current schedule. It continues serving~$r_2$ and then serves~$r_1$ at time $3\alpha+3$. Together with \cref{noreturn}, it follows that, after serving $r_1$, the server immediately starts serving the remaining request $r_3$. Moving from~$1$ to $2-3\alpha-\epsilon$ takes $3\alpha-1+\epsilon$ time units, i.e., the server serves~$r_3$ at time $(3\alpha+3)+(2-3\alpha-\epsilon)=6\alpha+2+\epsilon$.
	Thus, the competitive ratio is at least
	\begin{equation*}
		\frac{6\alpha+2+\epsilon}{\opt} = \frac{6\alpha+2+\epsilon}{3\alpha+\epsilon} = 2+\frac{2-\epsilon}{3\alpha+\epsilon}.
	\end{equation*}
	The statement follows by taking the limit $\epsilon\rightarrow0$.
\end{proof}

\begin{figure}
	\begin{center}
		\begin{tikzpicture}[scale=0.95,font=\small,decoration=brace,every node/.style={scale=1}]
		\def\endpt{1pt}
		\def\reqpt{2.5pt}
		
		\def\opt{\textsc{Opt}}
		
		\def\xmax{11}
		\def\ymin{-2}
		\def\ymax{2}
		
		\def\alph{1.25}
		
		\draw[line width=1pt,black,->,>=stealth] (0,{\ymin})--(0,{\ymax});
		\draw[line width=1pt,black,->,>=stealth] ({-0.25},0)--({\xmax},0);
		\node[above left] at (0,0) {$0$};
		\node[below right, align=left] at (0,{\ymax}) {position};
		\node[above left, align=left] at ({\xmax},0) {time};
		
		\draw [fill=ForestGreen] ({\xmax-1.8},{\ymax-0.15}) rectangle ({\xmax-2},{\ymax-0.35});
		\node[below right,align=left] at ({\xmax-1.8},{\ymax}) {$\wait$};
		\draw [fill=NavyBlue] ({\xmax-1.8},{\ymax-0.55}) rectangle ({\xmax-2},{\ymax-0.75});
		\node[below right,align=left,] at ({\xmax-1.8},{\ymax-0.4}) {\opt};
		
		\draw[line width=1pt,black,-] ({3*\alph},-0.175)--++(0,0.175);
		\node[align=center] at ({3*\alph},-0.4) {$3\alpha$};
		
		\draw[line width=1pt,black,-] ({6*\alph+2},-0.075)--++(0,0.075);
		\node[align=center] at ({6*\alph+2},-0.4) {$\wait$};
		
		\draw[line width=1pt,black,-] (-0.075,{1})--++(0.075,0);
		\node[left,align=right] at (-0.075,{1}) {$1$};
		
		\draw[line width=1pt,black,-] (-0.075,-1)--++(0.075,0);
		\node[left,align=right] at (-0.075,-1) {$-1$};
		
		\draw[line width=1pt,black,-] (-0.075,{2-3*\alph-0.1})--++(0.075,0);
		\node[left,align=right] at (-0.075,{2-3*\alph-0.1}) {$2-3\alpha-\epsilon$};
		
		\draw[line width=2.5pt,black,-] (0,0)--({3*\alph},0)--({3*\alph+1},-1)--({3*\alph+3},1)--({6*\alph+2.1},{2-3*\alph-0.1});
		\filldraw[fill=ForestGreen] ({6*\alph+2.1},{2-3*\alph-0.1}) circle (\endpt);
		\draw[line width=1.5pt,ForestGreen,-] (0,0)--({3*\alph},0)--({3*\alph+1},-1)--({3*\alph+3},1)--({6*\alph+2.1},{2-3*\alph-0.1});
		
		\draw[line width=2.5pt,black,-] (0,0)--(1,1)--({3*\alph+0.1},{2-3*\alph-0.1});
		\filldraw[fill=NavyBlue] ({3\alph},{2-3*\alph}) circle (\endpt);
		\draw[line width=1.5pt,NavyBlue,-] (0,0)--(1,1)--({3*\alph+0.1},{2-3*\alph-0.1});
		
		\filldraw[fill=black] (0,0) circle (\reqpt);
		
		\draw[line width=5pt,black,-] ({3*\alph+2},0)--({3*\alph+3},1);
		\draw[line width=4pt,VioletRed,-] ({3*\alph+2},0)--({3*\alph+3},1);
		\draw[line width=2.5pt,black,-] ({3*\alph+2},0)--({3*\alph+3},1);
		\draw[line width=1.5pt,ForestGreen,-] ({3*\alph+2},0)--({3*\alph+3},1);
		\filldraw[fill=VioletRed] 	({3*\alph+2},0) circle (\reqpt);
		\filldraw[fill=VioletRed] 	({3*\alph+3},1) circle (\reqpt);
		
		\draw[line width=5pt,black,-] ({3*\alph},0)--({3*\alph+1},-1);
		\draw[line width=4pt,Goldenrod,-] ({3*\alph},0)--({3*\alph+1},-1);
		\draw[line width=2.5pt,black,-] ({3*\alph},0)--({3*\alph+1},-1);
		\draw[line width=1.5pt,ForestGreen,-] ({3*\alph},0)--({3*\alph+1},-1);
		\filldraw[fill=Goldenrod] 	({3*\alph},0) circle (\reqpt);
		\filldraw[fill=Goldenrod] 	({3*\alph+1},-1) circle (\reqpt);
		
		\filldraw[fill=Red] ({6*\alph+2.1},{2-3*\alph-0.1}) circle (\reqpt);
		
		\draw[line width=5pt,black,-] (0,0)--(1,1);
		\draw[line width=4pt,VioletRed,-] (0,0)--(1,1);
		\draw[line width=2.5pt,black,-] (0,0)--(1,1);
		\draw[line width=1.5pt,NavyBlue,-] (0,0)--(1,1);
		\filldraw[fill=VioletRed] 	(0,0) circle (\reqpt);
		\filldraw[fill=VioletRed] 	(1,1) circle (\reqpt);
		
		\draw[line width=5pt,black,-] (2,0)--(3,-1);
		\draw[line width=4pt,Goldenrod,-] (2,0)--(3,-1);
		\draw[line width=2.5pt,black,-] (2,0)--(3,-1);
		\draw[line width=1.5pt,NavyBlue,-] (2,0)--(3,-1);
		\filldraw[fill=Goldenrod] 	(2,0) circle (\reqpt);
		\filldraw[fill=Goldenrod] 	(3,-1) circle (\reqpt);
		
		\filldraw[fill=Red] ({3*\alph+0.1},{2-3*\alph-0.1}) circle (\reqpt);

		\filldraw[fill=Red] ({3*\alph+0.1},{2-3*\alph-0.1}) circle (\reqpt);
		
		\draw[line width=2.5pt,black,densely shadow,->,>=stealth] (0,-0.01)--(0,1.03);
		\draw[line width=2.5pt,black,densely shadow,->,>=stealth] (0,0.01)--(0,-1.03);
		\filldraw[fill=VioletRed] (0,0) circle (\reqpt);
		\draw[fill=Goldenrod,Goldenrod] ({-\reqpt+.5pt},0) -- ({-\reqpt+.5pt},0) arc(180:360:{\reqpt-.5pt}) --cycle;
		\draw[line width=1.5pt,VioletRed,densely dashed,->,>=stealth] (0,0)--(0,1);
		\draw[line width=1.5pt,Goldenrod,densely dashed,->,>=stealth] (0,0)--(0,-1);
		\end{tikzpicture}
	\end{center}
	\caption{Instance of the open online dial-a-ride problem on the line where $\wait(\alpha)$ has a competitive ratio of at least $2+\frac{2}{3\alpha}$ for all $\alpha\geq1$.}
	\label{fig:lower_bound_large_alpha}
\end{figure}

Next, we give a lower bound construction for $\alpha < 1$.

\begin{proposition}\label{thm:lower_bound_alpha-leq-1}
	For $\alpha\in[0,1)$, the algorithm $\wait (\alpha)$ has a competitive ratio of at least $1+\frac{3}{\alpha+1}$ for the open dial-a-ride problem on the line.
\end{proposition}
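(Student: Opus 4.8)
The plan is to exhibit an explicit request sequence on the real line, parameterized by a small quantity that is sent to a limit (as $\epsilon\to0$ in the proof of \cref{thm:lower_bound_alpha-geq-1}), so that the ratio $\alg(\sigma)/\opt(\sigma)$ approaches $1+\frac{3}{\alpha+1}$. The phenomenon to exploit is specific to $\alpha<1$: the server only waits until time $\alpha\cdot\opt(t)<\opt(t)$, so it commits to serving the already-released requests prematurely, and—because $\alpha$ is small—once it is under way it cannot deliver its load and return to $\o$ within the interrupt budget $\alpha\cdot\opt(t)$, so it is forced to complete its current schedule before reacting to a newly released request. The offline optimum, knowing the whole sequence, instead serves the requests in an order (or with brief dwelling) that keeps its completion time small. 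Since \cref{lem:1plusalpha} already gives the bound $1+\alpha$, it suffices to beat $1+\alpha$, which for $\alpha<1$ is exactly what $1+\frac{3}{\alpha+1}$ does.

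Concretely, I would first release a request that, after the short wait, sends $\wait(\alpha)$ off toward one side and fixes the value of $\opt$ at that moment. I would then release further requests, timed to appear \emph{just after} the server has departed (so that they are not folded into the first schedule) and positioned so that (i) the committed, non-interrupting server must traverse the line back and forth to reach them, while (ii) the optimum serves them essentially for free—by waiting briefly at $\o$ before departing, by turning at exactly their location, or by passing through them at the right instant on a single efficient sweep. The proof then splits into two computations: exhibiting the offline schedule to upper bound $\opt$, and tracing $\wait(\alpha)$ command by command to lower bound its completion time. In the trace one must verify, at each release, that the interrupt test ``deliver loaded requests and return to $\o$ by $\alpha\cdot\opt(t)$'' indeed fails—this is precisely where $\alpha<1$ is used, in analogy with \cref{noreturn}—and that upon each idle event the server does not incur an unexpected extra wait (i.e.\ that $t\ge\alpha\cdot\opt(t)$ holds).

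The main obstacle is the calibration: the redundant travel forced on the algorithm must amount to three length-units while $\opt$ stays at $\alpha+1$ length-units (after rescaling), yielding $\frac{\alpha+4}{\alpha+1}=1+\frac{3}{\alpha+1}$. The difficulty is a genuine tension—\emph{any} request that forces the server to travel also tends to enlarge $\opt$, because the optimum must visit the same point. The decisive requests therefore have to be placed exactly at the few locations where the optimum can absorb them at no extra cost (the origin, where it may wait before leaving; its turning point; and its final endpoint, where it may dwell), so that the server is lured into repeated back-and-forth motion while the optimum's route length is unaffected. A secondary obstacle is that the verifications of the no-interrupt and no-premature-wait conditions depend on $\alpha$, so the argument for $\alpha\in[0,1)$ may have to be organized into sub-ranges, just as the overall lower bound is already split between this proposition and \cref{thm:lower_bound_alpha-geq-1}.
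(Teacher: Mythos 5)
You have correctly identified the paper's strategy --- pin $\opt$ at roughly $\alpha+1$, lure $\wait(\alpha)$ into spending roughly $\alpha+4$, place the later requests where the optimum can absorb them at no cost (the origin before departure, points on its sweep, its final endpoint), and verify at each release that the interrupt test fails using $\alpha<1$ --- but what you have written is a plan, not a proof. The entire mathematical content of this proposition is the explicit construction, and you leave exactly that part open: you yourself name ``the calibration'' as the main obstacle and describe the tension (any request that forces the algorithm to travel also threatens to enlarge $\opt$) without resolving it. As it stands, nothing in your proposal certifies that a request sequence with the required properties exists, so the bound $1+\frac{3}{\alpha+1}$ is not established; the verification steps you defer (no interrupt at each release, no unexpected waiting at each idle event) are precisely the steps whose feasibility is in question.

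For comparison, the paper resolves the calibration with five requests: $r_1=(\epsilon/2,1/2;\epsilon/2)$, $r_2=(1,1;\epsilon)$, $r_3=(0,0;\alpha+\epsilon)$, $r_4=(1/2+\epsilon,1;\alpha+2\epsilon)$, and $r_5=(1,1;\alpha+1+\epsilon)$, with $\epsilon<\min\{\alpha/2,\,1/\alpha-\alpha,\,1-\alpha\}$. The optimum waits at $\o$ until time $\alpha+\epsilon$ (absorbing $r_3$), then makes a single sweep to $1$, delivering $r_1$ and $r_4$ en route and arriving exactly when $r_5$ appears, so $\opt=\alpha+1+\epsilon$. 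The algorithm departs at time $\alpha$ carrying $r_1$; at the releases of $r_3$ and $r_4$ the interrupt test fails because delivering $r_1$ and returning to $\o$ takes until time $\alpha+1>\alpha(\alpha+1+\epsilon)$, which is where $\epsilon<1/\alpha-\alpha$ (hence $\alpha<1$) enters; it finishes its first schedule at time $\alpha+1$ in position $1$, immediately runs a second schedule (serving $r_4$, then $r_3$) ending at $\o$ at time $3+\alpha-2\epsilon$, and finally walks back to $1$ for $r_5$, finishing at time $4+\alpha-2\epsilon$; letting $\epsilon\to0$ gives the claim. Note also that your secondary worry about splitting $[0,1)$ into sub-ranges does not materialize here: a single construction works for all $\alpha\in[0,1)$, with only the admissible range of $\epsilon$ depending on $\alpha$; case distinctions of that kind are needed only in the stronger bound of \cref{thm:lower_bound_large_alpha-leq-1}.
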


\begin{proof}
	Let $\alpha\in[0,1)$ and $\epsilon\in(0,\min\{ \frac{\alpha}{2},\frac{1}{\alpha}-\alpha,1-\alpha \})$. We construct an instance of the open dial-a-ride problem, where the competitive ratio of $\wait(\alpha)$ converges to $1+\frac{3}{\alpha+1}$ for $\epsilon\rightarrow0$ (cf. Figure \ref{fig:lower_bound_small_alpha}). We define the instance by giving the requests
	\begin{align*}
		&r_1=\left(\frac{\epsilon}{2},\frac{1}{2};\frac{\epsilon}{2}\right) \textrm{, } r_2=(1,1;\epsilon) \textrm{, } r_3=(0,0;\alpha+\epsilon) \textrm{, }\\
		 &r_4=\Bigl(\frac{1}{2}+\epsilon,1;\alpha+2\epsilon\Bigr) \textrm{, and } r_5=(1,1;\alpha+1+\epsilon).
	\end{align*}
	One solution is to first wait in $\o$ until time $\alpha+\epsilon$ and  serve~$r_3$. Then, the server can move to $\epsilon/2$, pick up~$r_1$ and deliver it. Then, we can move to~$\frac{1}{2}+\epsilon$, pick up~$r_4$ and deliver it. This can be done by time $\alpha+1+\epsilon$. Now, the server is in position~$1$ and can thus immediately serve $r_5$. It finishes serving all request in time $\alpha+1+\epsilon$. Since the last request is released at time $\alpha+1+\epsilon$, we have 
	\begin{equation}	\label{eq:optvalue}
	\opt:=\opt(\alpha+1+\epsilon)=\alpha+1+\epsilon.
	\end{equation}
	
	We now analyze what $\wait(\alpha)$ does on this request sequence. We have $\opt(\epsilon/2)=1/2$ so that $\alpha \cdot \opt(\epsilon/2)=\alpha/2> \epsilon$ and the server does not start moving before $r_2$ is released. Then, we have $\opt( \epsilon)=1$. Hence, the server waits in~$\o$ until time $\alpha$. Since no new requests arrive until this time, the server starts an optimal schedule serving $r_1$ and $r_2$, i.e., it moves to $\epsilon/2$ and picks up~$r_1$. At times $\alpha+\epsilon$ and $\alpha+2\epsilon$, $r_3$ and $r_4$ are released. We have $\opt(\alpha+\epsilon)=\opt(\alpha+2\epsilon)=\alpha+1+\epsilon$. Serving the loaded request~$r_1$ and returning to~$0$ would take the server until time
	\begin{equation}\label{eq:alpha+1_larger_alphaOpt}
		\alpha+1 \overset{\epsilon < \frac{1}{\alpha}-\alpha}{>} \alpha+(\alpha^2+\alpha\epsilon) = \alpha(\alpha+1+\epsilon) = \alpha\cdot\opt(\alpha+\epsilon).
	\end{equation}
	Thus, the server keeps following its tour and serves~$r_1$ and then~$r_2$ at time $\alpha+1$. By \cref{eq:alpha+1_larger_alphaOpt} and since $\opt(\alpha+1)=\opt(\alpha+\epsilon)$, the server immediately starts serving~$r_3$ and~$r_4$. The shortest tour is serving~$r_4$ first, i.e., the server starts moving towards $\frac{1}{2}+\epsilon$. At time $\alpha+1+\epsilon$, request~$r_5$ is released. Since
	\begin{equation*}
	\alpha+1+\epsilon \overset{\text{\cref{eq:optvalue}}}{=} \opt(\alpha+1+\epsilon)>\alpha\cdot\opt(\alpha+1+\epsilon),
	\end{equation*}
	 the server keeps following its tour, which is finished at time \linebreak ${(\alpha+1)+(1-2\epsilon)+1=3+\alpha-2\epsilon}$ in position~$\o$. Then, the server starts its last tour in order to serve~$r_5$. It moves to~$1$ and finishes serving the last request at time $4+\alpha-2\epsilon$. Thus, the competitive ratio is
	\begin{equation*}
		\frac{4+\alpha-2\epsilon}{\opt} = \frac{4+\alpha-2\epsilon}{\alpha+1+\epsilon} = 1+\frac{3-3\epsilon}{\alpha+1+\epsilon}.
	\end{equation*}
	The statement follows by taking the limit $\epsilon\rightarrow0$.
\end{proof}

\begin{figure}
	\begin{center}
		\begin{tikzpicture}[scale=0.95,font=\small,decoration=brace,every node/.style={scale=1}]
		\def\endpt{1pt}
		\def\reqpt{2.5pt}
		
		\def\opt{\textsc{Opt}}
		
		\def\xmax{12}
		\def\ymin{-1}
		\def\ymax{2.5}
		
		\def\alph{.7}
		\def\eps{.25}
		
		\draw[line width=1pt,black,->,>=stealth] (0,{\ymin})--(0,{\ymax});
		\draw[line width=1pt,black,->,>=stealth] ({-0.25},0)--({\xmax},0);
		\node[above left] at (0,0) {$0$};
		\node[below right, align=left] at (0,{\ymax}) {position};
		\node[above left, align=left] at ({\xmax},0) {time};
		
		\draw [fill=ForestGreen] ({\xmax-1.8},{\ymax-0.15}) rectangle ({\xmax-2},{\ymax-0.35});
		\node[below right,align=left] at ({\xmax-1.8},{\ymax}) {$\wait$};
		\draw [fill=NavyBlue] ({\xmax-1.8},{\ymax-0.55}) rectangle ({\xmax-2},{\ymax-0.75});
		\node[below right,align=left,] at ({\xmax-1.8},{\ymax-0.4}) {$\opt$};
		
		\draw[line width=1pt,black,-] ({2*\alph},-0.075)--++(0,0.075);
		\node[align=center] at ({2*\alph},-0.2) {$\alpha$};
		
		\draw[line width=1pt,black,-] ({2*\alph+\eps},-0.075)--++(0,0.075);
		\node[align=center] at ({2*\alph+\eps},-0.6) {$\alpha\!+\!\epsilon$};
		
		\draw[line width=1pt,black,-] ({2*\alph+8-2*\eps},-0.075)--++(0,0.075);
		\node[align=center] at ({2*\alph+8-2*\eps},-0.2) {$\wait$};
		
		\draw[line width=1pt,black,-] (-0.075,{2})--++(0.075,0);
		\node[left,align=right] at (-0.075,{2}) {$1$};
		
		\draw[line width=1pt,black,-] (-0.075,1)--++(0.075,0);
		\node[left,align=right] at (-0.075,1) {$\frac{1}{2}$};
		
		\draw[line width=2.5pt,black,-] (0,0)--({2*\alph},0)--({2*\alph+2},2)--({2*\alph+3-\eps},1+\eps)--({2*\alph+4-2*\eps},2)--({2*\alph+6-2*\eps},0)--({2*\alph+8-2*\eps},2);
		\filldraw[fill=ForestGreen] ({2*\alph+8-2*\eps},2) circle (\endpt);
		\draw[line width=1.5pt,ForestGreen,-] (0,0)--({2*\alph},0)--({2*\alph+2},2)--({2*\alph+3-\eps},1+\eps)--({2*\alph+4-2*\eps},2)--({2*\alph+6-2*\eps},0)--({2*\alph+8-2*\eps},2);
		
		\draw[line width=2.5pt,black,-] (0,0)--({2*\alph+\eps},0)--({2*\alph+2+\eps},2);
		\filldraw[fill=NavyBlue] ({2*\alph+2+\eps},2) circle (\endpt);
		\draw[line width=1.5pt,NavyBlue,-] (0,0)--({2*\alph+\eps},0)--({2*\alph+2+\eps},2);
		
		\filldraw[fill=black] (0,0) circle (\reqpt);
		
		\draw[line width=5pt,black,-] ({2*\alph},0)--({2*\alph+1},1);
		\draw[line width=4pt,VioletRed,-] ({2*\alph},0)--({2*\alph+1},1);
		\draw[line width=2.5pt,black,-] ({2*\alph},0)--({2*\alph+1},1);
		\draw[line width=1.5pt,ForestGreen,-] ({2*\alph},0)--({2*\alph+1},1);
		\filldraw[fill=VioletRed] 	({2*\alph},0) circle (\reqpt);
		\filldraw[fill=VioletRed] 	({2*\alph+1},1) circle (\reqpt);
		
		\filldraw[fill=Red] (2*\alph+2,2) circle (\reqpt);
		
		\draw[line width=5pt,black,-] ({2*\alph+3-\eps},{1+\eps})--({2*\alph+4-2*\eps},2);
		\draw[line width=4pt,Goldenrod,-] ({2*\alph+3-\eps},{1+\eps})--({2*\alph+4-2*\eps},2);
		\draw[line width=2.5pt,black,-] ({2*\alph+3-\eps},{1+\eps})--({2*\alph+4-2*\eps},2);
		\draw[line width=1.5pt,ForestGreen,-] ({2*\alph+3-\eps},{1+\eps})--({2*\alph+4-2*\eps},2);
		\filldraw[fill=Goldenrod] 	({2*\alph+3-\eps},{1+\eps}) circle (\reqpt);
		\filldraw[fill=Goldenrod] 	({2*\alph+4-2*\eps},2) circle (\reqpt);
		
		\filldraw[fill=RawSienna] ({2*\alph+6-2*\eps},0) circle (\reqpt);
		
		\filldraw[fill=Orange] ({2*\alph+8-2*\eps},2) circle (\reqpt);
		
		\draw[line width=5pt,black,-] ({2*\alph+\eps},0)--({2*\alph+1+\eps},1);
		\draw[line width=4pt,VioletRed,-] ({2*\alph+\eps},0)--({2*\alph+1+\eps},1);
		\draw[line width=2.5pt,black,-] ({2*\alph+\eps},0)--({2*\alph+1+\eps},1);
		\draw[line width=1.5pt,NavyBlue,-] ({2*\alph+\eps},0)--({2*\alph+1+\eps},1);
		\filldraw[fill=VioletRed] 	({2*\alph+\eps},0) circle (\reqpt);
		\filldraw[fill=VioletRed] 	({2*\alph+1+\eps},1) circle (\reqpt);
		
		\filldraw[fill=Red] (2*\alph+2+\eps,2) circle (\reqpt);
		
		\filldraw[fill=RawSienna] (2*\alph+\eps,0) circle (\reqpt);
		
		\draw[line width=5pt,black,-] ({2*\alph+1+2*\eps},{1+\eps})--({2*\alph+2+\eps},2);
		\draw[line width=4pt,Goldenrod,-] ({2*\alph+1+2*\eps},{1+\eps})--({2*\alph+2+\eps},2);
		\draw[line width=2.5pt,black,-] ({2*\alph+1+2*\eps},{1+\eps})--({2*\alph+2+\eps},2);
		\draw[line width=1.5pt,NavyBlue,-] ({2*\alph+1+2*\eps},{1+\eps})--({2*\alph+2+\eps},2);
		\filldraw[fill=Goldenrod] 	({2*\alph+1+2*\eps},{1+\eps}) circle (\reqpt);
		\filldraw[fill=Goldenrod] 	({2*\alph+2+\eps},2) circle (\reqpt);
		
		\filldraw[fill=Orange] (2*\alph+2\eps,2) circle (\reqpt);

		\draw[line width=2.5pt,black,densely shadow,->,>=stealth] (0,-0.01)--(0,1.03);
		\filldraw[fill=VioletRed] (0,0) circle (\reqpt);
		\draw[line width=1.5pt,VioletRed,densely dashed,->,>=stealth] (0,0)--(0,1);
		
		\filldraw[fill=Red] (0,2) circle (\reqpt);
		
		\filldraw[fill=RawSienna] (2*\alph+\eps,0) circle (\reqpt);
		
		\draw[line width=2.5pt,black,densely shadow,->,>=stealth] (2*\alph+\eps,0.99+\eps)--(2*\alph+\eps,2.03);
		\filldraw[fill=Goldenrod] (2*\alph+\eps,1+\eps) circle (\reqpt);
		\draw[line width=1.5pt,Goldenrod,densely dashed,->,>=stealth] (2*\alph+\eps,1+\eps)--(2*\alph+\eps,2);
		
		\filldraw[fill=Orange] (2*\alph+2\eps,2) circle (\reqpt);
		\end{tikzpicture}
	\end{center}
	\caption{Instance of the open online dial-a-ride problem on the line where $\wait(\alpha)$ has a competitive ratio of at least $1+\frac{3}{\alpha+1}$ for all $\alpha\in[0,1)$.}
	\label{fig:lower_bound_small_alpha}
\end{figure}
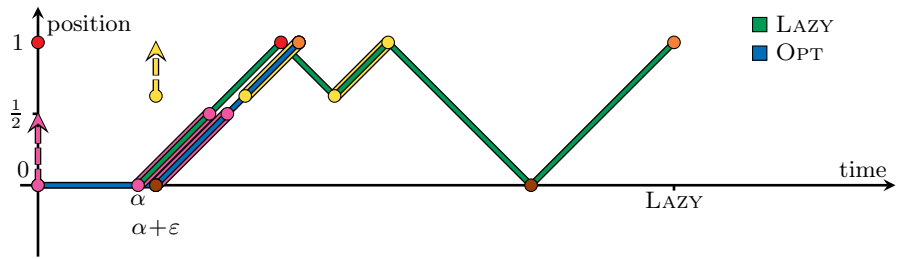

We now give another lower bound construction which is stonger than the previous one for large~$\alpha < 1$.

\begin{proposition}\label{thm:lower_bound_large_alpha-leq-1}
	For $\alpha\in[0,1)$, the algorithm $\wait (\alpha)$ has a competitive ratio of at least $2+\alpha+\frac{1-\alpha}{2+3\alpha}$ for the open online dial-a-ride problem on the line.
\end{proposition}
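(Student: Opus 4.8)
The plan is to reuse the template of \cref{thm:lower_bound_alpha-geq-1} and \cref{thm:lower_bound_alpha-leq-1}: for fixed $\alpha\in[0,1)$ I would give an explicit request sequence on the line, parameterized by a small $\epsilon>0$ (subject to finitely many strict upper bounds collected along the way), and show that the ratio between the completion time of $\wait(\alpha)$ and $\opt$ converges to $2+\alpha+\frac{1-\alpha}{2+3\alpha}$ as $\epsilon\to0$. Since the competitive ratio is scale invariant, I may fix the overall scale so that the target reads
\begin{equation*}
2+\alpha+\frac{1-\alpha}{2+3\alpha}=\frac{(2+\alpha)(2+3\alpha)+(1-\alpha)}{2+3\alpha}=\frac{5+7\alpha+3\alpha^2}{2+3\alpha},
\end{equation*}
i.e., I would aim for an instance with $\opt=2+3\alpha+O(\epsilon)$, pinned down by the release time of the last request, on which $\wait(\alpha)$ finishes only at time $5+7\alpha+3\alpha^2+O(\epsilon)$.

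The driving mechanism is the same wait-commit-fail-to-interrupt pattern as in the other lower bounds, now tuned to the regime $\alpha<1$, where the deadline $\alpha\cdot\opt(t)$ in the interrupt test is comparatively small and hence easy to miss. Concretely, I would first release a small initial group of requests so that, after waiting at the origin until $\alpha\cdot\opt$ over these requests, $\wait(\alpha)$ commits to a schedule that carries it away from the origin with a request loaded. While this schedule runs, I would release a request far on the opposite side, at a position and time chosen so that (i) the offline optimum can still absorb it at no extra cost, keeping $\opt$ equal to the final release time, yet (ii) at that moment the time to deliver the load and return to the origin exceeds $\alpha\cdot\opt(t)$, so the schedule is \emph{not} interrupted. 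The later requests then force $\wait(\alpha)$ to complete its committed schedule, return, and only afterwards set out for the far request; the natural place for the $3\alpha^2$-term is an idle-and-wait phase governed by $\alpha\cdot\opt=\alpha(2+3\alpha)=2\alpha+3\alpha^2$, after which the final traversal yields the claimed completion time.

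As in the earlier proofs, the bookkeeping splits in two. For the optimum I would exhibit one concrete interleaving route serving every request by the last release time, giving $\opt\le 2+3\alpha+O(\epsilon)$, and combine it with the trivial bound that $\opt$ is at least the last release time; this fixes $\opt$. For the algorithm I would trace the execution step by step: at each release I verify, via an explicit inequality of the form ``deliver-and-return time $>\alpha\cdot\opt(t)$'' (which holds for $\alpha<1$ and small $\epsilon$), that the interrupt test fails, and at each idle moment I determine which branch of the ``upon becoming idle'' rule fires. Summing the committed schedule, the forced return, the waiting, and the final trip, and letting $\epsilon\to0$, then gives the bound; the same trace specializes correctly to $\alpha=0$, where $\wait(0)$ never waits and can only interrupt at the origin.

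I expect the main difficulty to be the joint calibration of positions and release times so that all requirements hold \emph{simultaneously}: the optimum must genuinely attain $2+3\alpha$ (no cheaper route may exist), every interrupt test must fail in the intended direction while the same requests remain cheaply servable by the optimum, and the resulting arithmetic must collapse to exactly $5+7\alpha+3\alpha^2$. The non-interrupt inequalities are the delicate part, and they are feasible precisely because $\alpha<1$ makes $\alpha\cdot\opt(t)$ small relative to the distance the server has already committed to; pinning down the exact thresholds, and the admissible range for $\epsilon$ that keeps the instance internally consistent (in particular $\opt$ monotone) across all $\alpha\in[0,1)$, is where the real work lies.
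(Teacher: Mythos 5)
Your plan correctly identifies the two target quantities ($\opt=2+3\alpha$, pinned by the last release time, and an algorithm completion time of $5+7\alpha+3\alpha^2$) and the general mechanism (commit with a loaded request, make the interrupt test fail, force a suboptimal serving order, a waiting phase until $\alpha(2+3\alpha)=2\alpha+3\alpha^2$, and one final forced traversal), which is indeed the skeleton of the paper's argument. But what you have written is a declaration of intent, not a proof: in a lower-bound statement of this type the proof \emph{is} the explicit construction, and the positions, release times, verification that no cheaper offline route exists, and the chain of non-interrupt inequalities are precisely the parts you defer as ``where the real work lies.'' Nothing is actually established.

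Moreover, the single generic construction you describe cannot work on all of $[0,1)$, and the reason is the opposite of your stated intuition that ``$\alpha<1$ makes $\alpha\cdot\opt(t)$ small and hence easy to miss.'' Your plan requires the far request (at position roughly $2+\alpha$, which the optimum reaches at time $2+3\alpha$) to be released early enough that the optimum absorbs it at no extra cost; but then at its release time $t\approx\alpha$ we already have $\opt(t)=2+3\alpha$, while the server, having just left the origin carrying $r_1$, can deliver and return by time roughly $2+\alpha$. The no-interrupt condition thus needs $2+\alpha>\alpha\cdot\opt(t)=2\alpha+3\alpha^2$, i.e., $(3\alpha-2)(\alpha+1)<0$, which fails once $\alpha$ exceeds $2/3$: for $\alpha\in(2/3,1)$ the interrupt test \emph{succeeds}, the server resets to the origin, and the intended bound collapses. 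This is exactly why the paper's proof of \cref{thm:lower_bound_large_alpha-leq-1} splits into three cases: for larger $\alpha$ it inserts intermediate ``decoy'' point requests (at $\frac{2}{\alpha}+1-2\alpha-\epsilon$, and for $\alpha\gtrsim0.85$ a second one, together with staged releases of the requests at $2+\alpha-\epsilon$) whose sole purpose is to keep $\opt(t)$, and hence the interrupt threshold $\alpha\cdot\opt(t)$, small at intermediate times, while still driving the final optimum up to $2+3\alpha$ and keeping the algorithm waiting at position $1$ until $2\alpha+3\alpha^2$. This case analysis and the staging of $\opt(t)$ over time are the essential missing ideas in your proposal, not routine calibration of $\epsilon$.
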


\begin{proof}
	Let $\epsilon>0$ be small enough. We construct an instance of the open online  dial-a-ride problem on the line, where the competitive ratio of $\wait(\alpha)$ converges to $2+\alpha+\frac{1-\alpha}{2+3\alpha}$ for $\epsilon\rightarrow0$ (cf. Figure~\ref{fig:lower_bound_large_alpha_smaller_1}). We distinguish between three cases.
	
	\emph{Case 1} ($\alpha\in[0,2/3)$): We define the instance by giving the requests
	\begin{eqnarray*}
		r_1 &=& (0,1;\epsilon), \\
		r_2 &=& (-\alpha,-\alpha;\alpha+\epsilon), \\
		r_3 &=& (2+\alpha-\epsilon,2+\alpha-\epsilon;\alpha+2\epsilon), \\
		r_4 &=& (2+\alpha-\epsilon,2+\alpha-\epsilon;2+3\alpha).
	\end{eqnarray*}
	One solution is to move to~$-\alpha$ and wait there until time~$\alpha+\epsilon$. Then,~$r_2$ can be served, and the server can move to~$0$ where it arrives at time~$2\alpha+\epsilon$. It picks up~$r_1$ and delivers it at time~$2\alpha+1+\epsilon$ at~$1$. It keeps moving to position~$2+\alpha-\epsilon$ and serves~$r_3$ and~$r_4$ there at time~$2+3\alpha$. Since the last request is released at time~$2+3\alpha$, we have
	\begin{equation*}
	\opt:=\opt(2+3\alpha)=2+3\alpha.
	\end{equation*}
	We now analyze what $\wait(\alpha)$ does. We have $\opt(\epsilon)=1+\epsilon$. Thus, the server starts waiting in~$0$ until time $\alpha(1+\epsilon)$. At time~$\alpha(1+\epsilon)$, the server starts an optimal schedule over all unserved requests, i.e., over~$\{r_1\}$. It picks up~$r_1$ and starts moving towards~$1$. At time~$\alpha+\epsilon>\alpha(1+\epsilon)$, request~$r_2$ arrives. We have $\opt(\alpha+\epsilon)=2\alpha+1+\epsilon$. Serving the loaded request~$r_1$ and returning to the origin would take the server until time
	\begin{equation*}
	\alpha(1+\epsilon)+2 > \alpha(2\alpha+1+\epsilon)=\alpha\opt(\alpha+\epsilon),
	\end{equation*}
	i.e., the server continues its current schedule. At time~$\alpha+2\epsilon$,~$r_3$ is released. We have $\opt(\alpha+2\epsilon)=2+3\alpha$. Serving the loaded request and returning to the origin would still take until time
	\begin{equation*}
	\alpha(1+\epsilon)+2 > 2\alpha+3\alpha^2=\alpha\opt(\alpha+2\epsilon),
	\end{equation*}
	where the inequality follows from the fact that $\alpha<\frac{2}{3}$. Thus, the server continues the current schedule, which is finished at time $\alpha(1+\epsilon)+1$ in position~$1$. 
	The server waits until time \mbox{$\max\{1+\alpha,\alpha\opt(\alpha+2\epsilon)\}$} and then starts the next schedule. Since $\alpha\opt(\alpha+2\epsilon)=2\alpha+3\alpha^2<2+3\alpha$, the next schedule is thus started before $r_4$ is released. It is faster to serve $r_3$ before $r_2$  in this schedule because the server starts from point 1. At time~$2+3\alpha$, request~$r_4$ is released. Since this does not change the completion time of the optimum and because the server started the current schedule not earlier than time $\alpha\opt(\alpha+2\epsilon)$, it continues the current schedule. The second schedule takes $(1+\alpha-\epsilon)+(2+2\alpha-\epsilon)=3+3\alpha-2\epsilon$ time units and ends in~$-\alpha$. The last schedule, in which~$r_4$ is served, is started immediately and takes~$2+2\alpha-\epsilon$ time units. Hence, $\wait(\alpha)$ takes at least
	\begin{equation*}
	(\alpha(2+3\alpha))+(3+3\alpha-2\epsilon)+(2+2\alpha-\epsilon)=5+7\alpha+3\alpha^2-3\epsilon
	\end{equation*}
	time units to serve all requests.
	
	\emph{Case 2} ($\alpha\in[\frac{2}{3},\frac{\sqrt{37-12\epsilon}-1}{6})$): We define the instance by giving the requests
	\begin{eqnarray*}
		r_1 &=& (0,1;\epsilon), \\
		r_2 &=& (-\alpha,-\alpha;\alpha+\epsilon), \\
		r_3 &=& \Bigl(\frac{2}{\alpha}+1-2\alpha-\epsilon,\frac{2}{\alpha}+1-2\alpha-\epsilon;\alpha+2\epsilon\Bigr), \\
		r_4 &=& (2+\alpha-\epsilon,2+\alpha-\epsilon;2+\alpha-\epsilon), \\
		r_5 &=& (2+\alpha-\epsilon,2+\alpha-\epsilon;2+3\alpha).
	\end{eqnarray*}
	One solution is to move to~$-\alpha$ and wait there until time~$\alpha+\epsilon$. Then,~$r_2$ can be served, and the server can move to~$0$ where it arrives at time~$2\alpha+\epsilon$. It picks up~$r_1$ and delivers it at time~$2\alpha+1+\epsilon$ at~$1$. It keeps moving to position~$\frac{2}{\alpha}+1-2\alpha-\epsilon$, where it arrives at time~$\frac{2}{\alpha}+1$ and immediately serves~$r_3$. It continues to move to~$2+\alpha-\epsilon$ and serves~$r_4$ and~$r_5$ there at time~$2+3\alpha$. Since the last request is released at time~$2+3\alpha$, we have
	\begin{equation*}
	\opt=\opt(2+3\alpha)=2+3\alpha.
	\end{equation*}
	We now analyze what $\wait(\alpha)$ does. We have $\opt(\epsilon)=1+\epsilon$. Thus, the server starts waiting in~$0$ until time $\alpha(1+\epsilon)$. At time~$\alpha(1+\epsilon)$, the server starts an optimal schedule over all unserved requests, i.e., over~$\{r_1\}$. It picks up~$r_1$ and starts moving towards~$1$. At time~$\alpha+\epsilon>\alpha(1+\epsilon)$, request~$r_2$ arrives. We have $\opt(\alpha+\epsilon)=2\alpha+1+\epsilon$. Serving the loaded request~$r_1$ and returning to the origin would take the server until time
	\begin{equation*}
	\alpha(1+\epsilon)+2 > \alpha(2\alpha+1+\epsilon)=\alpha\opt(\alpha+\epsilon),
	\end{equation*}
	i.e., the server continues its current schedule. At time~$\alpha+2\epsilon$,~$r_3$ is released. We have $\opt(\alpha+2\epsilon)=\frac{2}{\alpha}+1$. Serving the loaded request and returning to the origin would still take until time
	\begin{equation*}
	\alpha(1+\epsilon)+2 > 2+\alpha = \alpha\opt(\alpha+2\epsilon).
	\end{equation*}
	Thus, the server continues the current schedule which is finished at time $1+\alpha$ in position~$1$. Since $1+\alpha<2+\alpha=\alpha\opt(\alpha+2\epsilon)$, the server starts waiting in~$1$ until time $2+\alpha$. At time $2+\alpha-\epsilon$, request~$r_4$ is released. We have ${\opt(2+\alpha-\epsilon)=2+3\alpha}$. It would take the server until time
	\begin{equation*}
	3+\alpha-\epsilon > 2\alpha+3\alpha^2 = \alpha\opt(2+\alpha-\epsilon)
	\end{equation*}
	to return to the origin, where the inequality follows from the fact that \linebreak ${\alpha<\frac{\sqrt{37-12\epsilon}-1}{6}}$. Thus, the server starts waiting until time \linebreak ${\alpha\opt(2+\alpha-\epsilon)=2\alpha+3\alpha^2}$. After it finished waiting, it starts the second schedule and tries to serve~$r_3$,~$r_4$ and then~$r_2$. At time~$2+3\alpha$, request~$r_5$ is released. Since this does not change the completion time of the optimum and because the server started the current schedule at time $\alpha\opt(\alpha+2\epsilon)$, it continues its current schedule. The second schedule takes $(1+\alpha-\epsilon)+(2+2\alpha-\epsilon)=3+3\alpha-2\epsilon$ time units and ends in~$-\alpha$. The last schedule, in which~$r_5$ is served, is started immediately and takes~$2+2\alpha-\epsilon$ time units. Hence, $\wait(\alpha)$ takes until time
	\begin{equation*}
	(\alpha(2+3\alpha))+(3+3\alpha-2\epsilon)+(2+2\alpha-\epsilon)=5+7\alpha+3\alpha^2-3\epsilon
	\end{equation*}
	to serve all requests.
	
	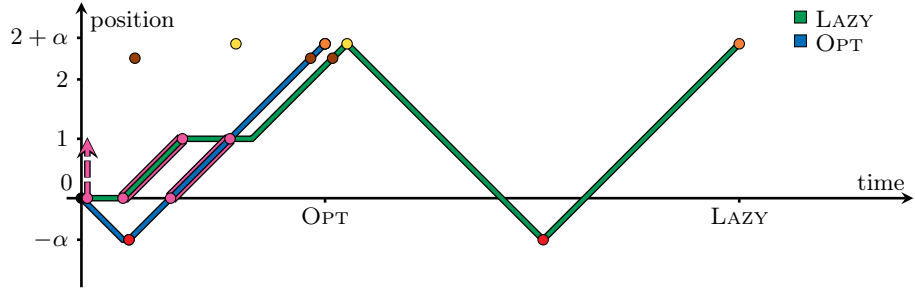
\begin{figure}
		\begin{center}
			\begin{tikzpicture}[scale=0.79,font=\small,decoration=brace,every node/.style={scale=1}]
				\def\endpt{1pt}
				\def\reqpt{2.5pt}
				
				\def\opt{\textsc{Opt}}
				
				\def\xmax{14}
				\def\ymin{-1.5}
				\def\ymax{3.3}
				
				\def\alph{.7}
				\def\eps{.1}
				
				\draw[line width=1pt,black,->,>=stealth] (0,{\ymin})--(0,{\ymax});
				\draw[line width=1pt,black,->,>=stealth] ({-0.25},0)--({\xmax},0);
				\node[above left] at (0,0) {$0$};
				\node[below right, align=left] at (0,{\ymax}) {position};
				\node[above left, align=left] at ({\xmax},0) {time};
				
				\draw [fill=ForestGreen] ({\xmax-1.8},{\ymax-0.15}) rectangle ({\xmax-2},{\ymax-0.35});
				\node[below right,align=left] at ({\xmax-1.8},{\ymax}) {$\wait$};
				\draw [fill=NavyBlue] ({\xmax-1.8},{\ymax-0.55}) rectangle ({\xmax-2},{\ymax-0.75});
				\node[below right,align=left,] at ({\xmax-1.8},{\ymax-0.4}) {$\opt$};
				
				\draw[line width=1pt,black,-] ({2+3*\alph},-0.075)--++(0,0.075);
				\node[align=center] at ({2+3*\alph},-0.3) {$\opt$};
				
				\draw[line width=1pt,black,-] ({5+7*\alph+3*\alph*\alph-3*\eps},-0.075)--++(0,0.075);
				\node[align=center] at ({5+7*\alph+3*\alph*\alph-3*\eps},-0.3) {$\wait$};
				
				\draw[line width=1pt,black,-] (-0.075,{2+\alph})--++(0.075,0);
				\node[left,align=right] at (-0.075,{2+\alph}) {$2+\alpha$};
				
				\draw[line width=1pt,black,-] (-0.075,{2})--++(0.075,0);
				\node[left,align=right] at (-0.075,{2}) {$2$};
				
				\draw[line width=1pt,black,-] (-0.075,1)--++(0.075,0);
				\node[left,align=right] at (-0.075,1) {$1$};
				
				\draw[line width=1pt,black,-] (-0.075,{-\alph})--++(0.075,0);
				\node[left,align=right] at (-0.075,{-\alph}) {$-\alpha$};
				
				\draw[line width=2.5pt,black,-] (0,0)--(\alph,0)--(1+\alph,1)--(2*\alph+3*\alph*\alph,1)--(1+3*\alph+3*\alph*\alph-\eps,2+\alph-\eps)--(3+5*\alph+3*\alph*\alph-2*\eps,-\alph)--(5+7*\alph+3*\alph*\alph-3*\eps,2+\alph-\eps);
				\filldraw[fill=ForestGreen] (5+7*\alph+3*\alph*\alph-3*\eps,2+\alph-\eps) circle (\endpt);
				\draw[line width=1.5pt,ForestGreen,-] (0,0)--(\alph,0)--(1+\alph,1)--(2*\alph+3*\alph*\alph,1)--(1+3*\alph+3*\alph*\alph-\eps,2+\alph-\eps)--(3+5*\alph+3*\alph*\alph-2*\eps,-\alph)--(5+7*\alph+3*\alph*\alph-3*\eps,2+\alph-\eps);
				
				\draw[line width=2.5pt,black,-] (0,0)--(\alph,-\alph)--(\alph+\eps,-\alph)--(2+3*\alph,2+\alph-\eps);
				\filldraw[fill=NavyBlue] (2+3*\alph,2+\alph-\eps) circle (\endpt);
				\draw[line width=1.5pt,NavyBlue,-] (0,0)--(\alph,-\alph)--(\alph+\eps,-\alph)--(2+3*\alph,2+\alph-\eps);
				
				\filldraw[fill=black] (0,0) circle (\reqpt);
				
				\draw[line width=5pt,black,-] ({\alph},0)--({\alph+1},1);
				\draw[line width=4pt,VioletRed,-] ({\alph},0)--({\alph+1},1);
				\draw[line width=2.5pt,black,-] ({\alph},0)--({\alph+1},1);
				\draw[line width=1.5pt,ForestGreen,-] ({\alph},0)--({\alph+1},1);
				\filldraw[fill=VioletRed] 	({\alph},0) circle (\reqpt);
				\filldraw[fill=VioletRed] 	({\alph+1},1) circle (\reqpt);
				
				\filldraw[fill=Red] (3+5*\alph+3*\alph*\alph-2*\eps,-\alph) circle (\reqpt);
				
				\filldraw[fill=RawSienna] (3*\alph*\alph+2/\alph-\eps,2/\alph+1-2*\alph-\eps) circle (\reqpt);
				
				\filldraw[fill=Goldenrod] (1+3*\alph+3*\alph*\alph-\eps,2+\alph-\eps) circle (\reqpt);
				
				\filldraw[fill=Orange] (5+7*\alph+3*\alph*\alph-3*\eps,2+\alph-\eps) circle (\reqpt);
				
				\draw[line width=5pt,black,-] ({2*\alph+\eps},0)--({2*\alph+1+\eps},1);
				\draw[line width=4pt,VioletRed,-] ({2*\alph+\eps},0)--({2*\alph+1+\eps},1);
				\draw[line width=2.5pt,black,-] ({2*\alph+\eps},0)--({2*\alph+1+\eps},1);
				\draw[line width=1.5pt,NavyBlue,-] ({2*\alph+\eps},0)--({2*\alph+1+\eps},1);
				\filldraw[fill=VioletRed] 	({2*\alph+\eps},0) circle (\reqpt);
				\filldraw[fill=VioletRed] 	({2*\alph+1+\eps},1) circle (\reqpt);
				
				\filldraw[fill=Red] (\alph+\eps,-\alph) circle (\reqpt);
				
				\filldraw[fill=RawSienna] (2/\alph+1,2/\alph+1-2*\alph-\eps) circle (\reqpt);
				
				\filldraw[fill=Goldenrod] (2+3*\alph,2+\alph-\eps) circle (\reqpt);
				
				\filldraw[fill=Orange] (2+3*\alph,2+\alph-\eps) circle (\reqpt);

				\draw[line width=2.5pt,black,densely shadow,->,>=stealth] (\eps,-0.01)--(\eps,1.03);
				\filldraw[fill=VioletRed] (\eps,0) circle (\reqpt);
				\draw[line width=1.5pt,VioletRed,densely dashed,->,>=stealth] (\eps,0)--(\eps,1);
				
				\filldraw[fill=Red] (\alph+\eps,-\alph) circle (\reqpt);
				
				\filldraw[fill=RawSienna] (\alph+2*\eps,2/\alph+1-2*\alph-\eps) circle (\reqpt);
				
				\filldraw[fill=Goldenrod] (2+\alph-\eps,2+\alph-\eps) circle (\reqpt);
				
				\filldraw[fill=Orange] (2+3*\alph,2+\alph-\eps) circle (\reqpt);
			\end{tikzpicture}
		\end{center}
		\caption{Instance of the open online dial-a-ride problem on the line where $\wait(\alpha)$ has a competitive ratio of at least $2+\alpha\frac{1-\alpha}{2+3\alpha}$ for all $\alpha\in[0,1)$. This is the construction of Case 2 in the proof of Theorem~\ref{thm:lower_bound_large_alpha-leq-1}.}
		\label{fig:lower_bound_large_alpha_smaller_1}
	\end{figure}
	
	\emph{Case 3} ($\alpha\in[\frac{\sqrt{37-12\epsilon}-1}{6},1)$): We define the instance by giving the requests
	\begin{eqnarray*}
		r_1 &=& (0,1;\epsilon), \\
		r_2 &=& (-\alpha,-\alpha;\alpha+\epsilon), \\
		r_3 &=& \Bigl(\frac{2}{\alpha}+1-2\alpha-\epsilon,\frac{2}{\alpha}+1-2\alpha-\epsilon;\alpha+2\epsilon\Bigr), \\
		r_4 &=& \Bigl(\frac{3}{\alpha}+1-2\alpha-\frac{(2+\alpha)\epsilon}{\alpha},\frac{3}{\alpha}+1-2\alpha-\frac{(2+\alpha)\epsilon}{\alpha};2+\alpha-\epsilon\Bigr), \\
		r_5 &=& (2+\alpha-\epsilon,2+\alpha-\epsilon;3+\alpha-3\epsilon), \\
		r_6 &=& (2+\alpha-\epsilon,2+\alpha-\epsilon;2+3\alpha).
	\end{eqnarray*}
	One solution is to move to~$-\alpha$ and wait there until time~$\alpha+\epsilon$. Then,~$r_2$ can be served, and the server can move to~$0$ where it arrives at time~$2\alpha+\epsilon$. It picks up~$r_1$ and delivers it at time~$2\alpha+1+\epsilon$ at~$1$. It keeps moving to position~$\frac{2}{\alpha}+1-2\alpha-\epsilon$, where it arrives at time~$\frac{2}{\alpha}+1$ and immediately serves~$r_3$. It continues to move to~$\frac{3}{\alpha}+1-2\alpha-\frac{(2+\alpha)\epsilon}{\alpha}$, where it arrives at time~$\frac{3}{\alpha}+1-\frac{2\epsilon}{\alpha}$ and immediately serves~$r_4$. Lastly, it moves to~$2+\alpha-\epsilon$, where it arrives at time~$2+3\alpha$ and serves~$r_4$ and~$r_5$ there. Since the last request is released at time~$2+3\alpha$, we have
	\begin{equation*}
	\opt=\opt(2+3\alpha)=2+3\alpha.
	\end{equation*}
	We now analyze what $\wait(\alpha)$ does. We have $\opt(\epsilon)=1+\epsilon$. Thus, the server starts waiting in~$0$ until time $\alpha(1+\epsilon)$. At time~$\alpha(1+\epsilon)$, the server starts an optimal schedule over all unserved requests, i.e., over~$\{r_1\}$. It picks up~$r_1$ and starts moving towards~$1$. At time~$\alpha+\epsilon>\alpha(1+\epsilon)$, request~$r_2$ arrives. We have $\opt(\alpha+\epsilon)=2\alpha+1+\epsilon$. Serving the loaded request~$r_1$ and returning to the origin would take the server until time
	\begin{equation*}
	\alpha(1+\epsilon)+2 > \alpha(2\alpha+1+\epsilon)=\alpha\opt(\alpha+\epsilon),
	\end{equation*}
	i.e., the server continues its current schedule. At time~$\alpha+2\epsilon$, request~$r_3$ is released. Now, we have \mbox{$\opt(\alpha+2\epsilon)=\frac{2}{\alpha}+1$}. Serving the loaded request and returning to the origin would still take until time
	\begin{equation*}
	\alpha(1+\epsilon)+2 > 2+\alpha = \alpha\opt(\alpha+2\epsilon).
	\end{equation*}
	Thus, the server continues the current schedule which is finished at time $1+\alpha$ in position~$1$. Since $1+\alpha<2+\alpha=\alpha\opt(\alpha+2\epsilon)$, the server starts waiting in~$1$ until time $2+\alpha$. At time $2+\alpha-\epsilon$, request~$r_4$ is released. We have ${\opt(2+\alpha-\epsilon)=\frac{3}{\alpha}+1-\frac{2\epsilon}{\alpha}}$. It would take the server until time
	\begin{equation*}
	3+\alpha-\epsilon > 3+\alpha-2\epsilon = \alpha\opt(2+\alpha-\epsilon),
	\end{equation*}
	to return to the origin, i.e., the server starts waiting until time \linebreak ${\alpha\opt(2+\alpha-\epsilon)=3+\alpha-2\epsilon}$. At time $3+\alpha-3\epsilon$, request~$r_5$ is released. We have $\opt(3+\alpha-3\epsilon)=2+3\alpha$. It would take the server until time
	\begin{equation*}
	4+\alpha-3\epsilon > 2\alpha+3\alpha^2 = \alpha\opt(3+\alpha-3\epsilon)
	\end{equation*}
	to return to the origin, where the inequality follows from the fact that $\alpha<1$ and that $\epsilon$ is small. Thus, the server waits in~$1$ until time \linebreak ${\alpha\opt(3+\alpha-3\epsilon)=2\alpha+3\alpha^2}$ and then starts its second schedule to serve requests $r_3$,~$r_4$,~$r_5$ and then~$r_2$. At time~$2+3\alpha$, request~$r_6$ is released. Since this does not change the completion time of the optimum and because the server started the current schedule at time \mbox{$\alpha\opt(3+\alpha-3\epsilon)$}, it continues its current schedule. The second schedule takes $(1+\alpha-\epsilon)+(2+2\alpha-\epsilon)=3+3\alpha-2\epsilon$ time units and ends in~$-\alpha$. The last schedule, in which~$r_6$ is served, is started immediately and takes~$2+2\alpha-\epsilon$ time units. Hence, $\wait(\alpha)$ takes until time
	\begin{equation*}
	(\alpha(2+3\alpha))+(3+3\alpha-2\epsilon)+(2+2\alpha-\epsilon)=5+7\alpha+3\alpha^2-3\epsilon
	\end{equation*}
	to serve all requests.
	
	In all three cases, the optimal solution is $2+3\alpha$ and the algorithm takes at least $5+7\alpha+3\alpha^2-3\epsilon$ time units. Thus, the competitive ratio of $\wait(\alpha)$ is at least
	\begin{equation*}
	\frac{5+7\alpha+3\alpha^2-3\epsilon}{2+3\alpha} = 2+\alpha+\frac{1-\alpha-3\epsilon}{2+3\alpha}.
	\end{equation*}
	The statement follows by taking the limit $\epsilon\rightarrow0$.
\end{proof}

Now, we combine our results for the lower bounds (cf. Figure~\ref{fig:plot_lower_bounds}). Combining~\cref{lem:1plusalpha} and \mbox{\cref{thm:lower_bound_alpha-geq-1}}, we obtain that, for $\alpha \geq 1$, $\wait(\alpha)$ has a competitive ratio of at least $\max\{1+\alpha,2 + 2/3 \alpha \}$, which proves the lower bound of \cref{thm:CR_of_lazy}.
Minimizing over $\alpha \geq 1$ yields a competitive ratio of at least~$\smash{3/2+\sqrt{11/12}} > 2.457$ in that domain.
For the case $\alpha <1$, we have seen in \cref{thm:lower_bound_alpha-leq-1} that the algorithm $\wait (\alpha)$ has a competitive ratio of at least $1 + 3/(\alpha +1) > 5/2$. 
Together, this proves \cref{cor:global_lower_bound}.

\begin{figure}
	\begin{center}
		\begin{tikzpicture}
		\begin{axis}[
		samples=100,
		xmin=0,
		xmax=3,
		xtick={0,1,2,...,4},
		ytick={2,3,4},
		ymin=2,
		ymax=4,
		xlabel=waiting parameter $\alpha$,
		ylabel=competitive ratio $\rho$,
		height=8cm,
		width=12cm,
		ylabel near ticks,
		xlabel near ticks,
		]
		
		\addplot[no markers, thick, domain=0:1.457][black!40!green,line width=1pt, dashed]{x+1};
		\addplot[no markers, thick, domain=1.457:1.618][black!40!green,line width=1pt]{x+1};
		\addplot[no markers, thick, domain=1.618:4][name path=B1,black!40!green,line width=1pt]{x+1};
		
		\addplot[no markers, domain=1:1.079][name path=B2,black!40!red,line width=1pt]{2+2/3/x};
		\addplot[no markers, domain=1.079:1.457][black!40!red,line width=1pt]{2+2/3/x};
		\addplot[no markers, domain=1.457:4][black!40!red,line width=1pt, dashed]{2+2/3/x};
		
		\addplot[no markers, domain=0:0.695][black!40!blue,line width=1pt,dashed]{2+x+(1-x)/(2+3*x)};
		\addplot[no markers, domain=0.695:1][name path=B3,black!40!blue,line width=1pt]{2+x+(1-x)/(2+3*x)};
		
		\addplot[no markers, domain=0:0.695][name path=B4,black!20!orange,line width=1pt]{1+3/(1+x)};
		\addplot[no markers, domain=0.695:1][black!20!orange,line width=1pt,dashed]{1+3/(1+x)};
		
		\draw[densely dashed,line width=0.5pt,black!70] (axis cs:1.618,\pgfkeysvalueof{/pgfplots/ymin}) -- (axis cs:1.618,\pgfkeysvalueof{/pgfplots/ymax});
		\addplot[no markers,densely dashed, domain=0:3,line width=0.5pt, black!70]{2.618};
	  \addplot[no markers,densely dotted, domain=0:3,line width=0.5pt, black]{2.457};
		
		\addplot[no markers, domain=1.618:4][name path=A1,black,line width=.5pt]{4};
		\addplot[no markers, domain=1:1.079][name path=A2,black,line width=.5pt]{4};
		\addplot[no markers, domain=0.695:1][name path=A3,black,line width=.5pt]{4};
		\addplot[no markers, domain=0:0.695][name path=A4,black,line width=.5pt]{4};
		
		\addplot[black!6!red!15, postaction={pattern=my north east lines,pattern color=gray}] fill between[of=A1 and B1];
		\addplot[black!6!red!15, postaction={pattern=my north east lines,pattern color=gray}] fill between[of=A2 and B2];
		\addplot[black!6!red!15, postaction={pattern=my north east lines,pattern color=gray}] fill between[of=A3 and B3];
		\addplot[black!6!red!15, postaction={pattern=my north east lines,pattern color=gray}] fill between[of=A4 and B4];
		
		\node[above] at (axis cs: 1.8,2.05) {$\alpha=\varphi$};
		\node[above right] at (axis cs: 2.5,2.618) {$\rho= \varphi+1$};
		\node[below right] at (axis cs: 2.5,2.457) {$\rho= 2.457$};
		\end{axis}
		\end{tikzpicture}
	\end{center}
	\caption{Lower bounds on the competitive ratio of $\wait(\alpha)$ depending on~$\alpha$. The lower bound of \cref{lem:1plusalpha} is depicted in green, the lower bound of \cref{thm:lower_bound_alpha-geq-1} in red, the lower bound of \cref{thm:lower_bound_alpha-leq-1} in orange, and the lower bound of \cref{thm:lower_bound_large_alpha-leq-1} in blue.
	The highlighted area over the plot indicates the domain of $\alpha$ for which no improvement over $1+\varphi$ is possible.}
	\label{fig:plot_lower_bounds}
\end{figure}
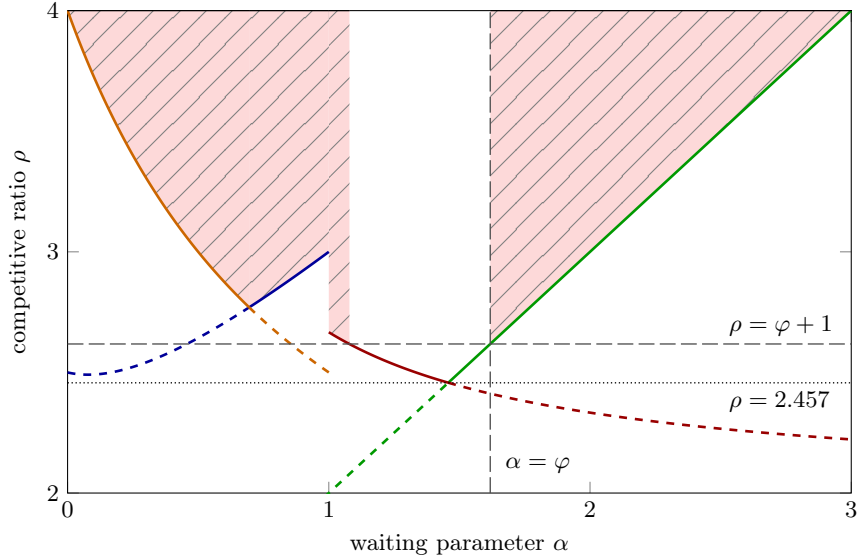

Moreover, we conclude that the results above narrow the range for $\alpha$ in which $\wait (\alpha)$ might have a competitive ratio better than $\varphi+1$. 
By \cref{lem:1plusalpha}, it follows that $\wait (\alpha)$ cannot have a better competitive ratio than $\varphi+1$ for any $\alpha >\varphi \approx 1.618$.
By \cref{thm:lower_bound_alpha-geq-1}, we obtain that $\wait (\alpha)$ has competitive ratio at least $\varphi+1$ for any $\alpha $ with $1 \leq \alpha \leq \smash{\frac{2\varphi}{3}}\approx 1.079$. \cref{thm:lower_bound_alpha-leq-1} yields that, for~\mbox{$0\leq\alpha\leq0.695$}, the competitive ratio of $\wait(\alpha)$ is at least~$2.768>\varphi+1$. Lastly, for $0.695<\alpha<1$, \cref{thm:lower_bound_large_alpha-leq-1} gives a lower bound of~$2.768$ on the competitive ratio of $\wait(\alpha)$. To summarize, an improvement of the competitive ratio of $\wait (\alpha)$ might only be possible for some \mbox{$\alpha \in (2\varphi/3,\varphi)\approx[1.08,1.618)$}, which proves \cref{cor:phi+1_lower_bound}.

%
%
%
 \bibliographystyle{splncs04}
\bibliography{Lazy}
%
%
%
%
%
\end{document}